\documentclass[conference]{IEEEtran}
\IEEEoverridecommandlockouts

\usepackage{amsmath,amssymb,amsfonts,amsthm}
\numberwithin{equation}{section}

\theoremstyle{plain} 
\newtheorem{theorem}{Theorem}[section]
\newtheorem{lemma}[theorem]{Lemma}
\newtheorem{proposition}[theorem]{Proposition}

\theoremstyle{remark} 

\theoremstyle{definition} 
\newtheorem{definition}[theorem]{Definition}

\newtheorem*{theorem*}{Theorem}
\newtheorem*{lemma*}{Lemma}
\newtheorem*{proposition*}{Proposition}
\newtheorem*{corollary*}{Corollary}
\newtheorem*{remark*}{Remark}
\newtheorem*{definition*}{Definition}

\usepackage{algorithm}
\usepackage{algpseudocode}

\usepackage{graphicx}
\usepackage{subcaption}
\usepackage{booktabs}
\usepackage{multirow}
\usepackage[table,xcdraw]{xcolor}
\graphicspath{{./images/}}

\usepackage[colorlinks=true, linkcolor=blue, citecolor=red, urlcolor=purple]{hyperref}
\usepackage[capitalise]{cleveref}

\usepackage{times}
\usepackage{helvet}
\usepackage{courier}
\usepackage{tcolorbox}
\usepackage{soul}
\usepackage{dsfont}
\usepackage{comment}
\usepackage{textcomp}

\newcommand{\DP}{differential privacy }

\newcommand{\Prob}{\mathbb P}

\newcommand{\subhead}[1]{\vspace{0.04in}\noindent\textbf{#1.}}

\begin{document}

\title{\textsc{Lap}\textsubscript{2}: Revisiting Laplace DP-SGD for High Dimensions via Majorization Theory}


\author{
\IEEEauthorblockN{Meisam Mohammady\textsuperscript{$\dagger$}, Qin Yang\textsuperscript{$\P$}, Nicholas Stout\textsuperscript{$\dagger$}, Ayesha Samreen\textsuperscript{$\dagger$}, Han Wang\textsuperscript{$\S$}, \\Christopher J Quinn\textsuperscript{$\dagger$}, Yuan Hong\textsuperscript{$\P$}}
\IEEEauthorblockA{
\textsuperscript{$\dagger$}Iowa State University, \textsuperscript{$\P$}University of Connecticut, \textsuperscript{$\S$}University of Kansas 
}
}

\maketitle

\begin{abstract}
Differentially Private Stochastic Gradient Descent (DP-SGD) is a cornerstone technique for ensuring privacy in deep learning, widely used in both training from scratch and fine-tuning large-scale language models. While DP-SGD predominantly relies on the Gaussian mechanism, the Laplace mechanism remains underutilized due to its reliance on $\ell_1$ norm clipping. This constraint severely limits its practicality in high-dimensional models because the $\ell_1$ norm of an $n$-dimensional gradient can be up to $\sqrt{n}$ times larger than its $\ell_2$ norm. As a result, the required noise scale, and thus the privacy loss, grows significantly with model size, leading to poor utility or untrainable models.

In this work, we introduce \textsc{Lap}\textsubscript{2}, a new solution that enables $\ell_2$ clipping for Laplace DP-SGD while preserving strong privacy guarantees. We overcome the dimensionality-driven clipping barrier, by computing coordinate-wise moment bounds and applying \textit{majorization theory} to construct a tight, data-independent upper bound over the full model. By exploiting the \textit{Schur-convexity} of the moment accountant function, we aggregate these bounds using a carefully designed majorization set that respects the $\ell_2$ clipping constraint. This yields a multivariate privacy accountant that scales gracefully with model dimension and enables the use of thousands of moments. Empirical evaluations demonstrate that our approach significantly improves the performance of Laplace DP-SGD, achieving results comparable to or better than Gaussian DP-SGD under strong privacy constraints. For instance, fine-tuning RoBERTa-base (125M parameters) on SST-2 achieves 87.88\% accuracy at $\mathbf{\epsilon = 0.54}$, outperforming Gaussian (87.16\%) and standard Laplace (48.97\%) under the same budget. 
\end{abstract}

\begin{IEEEkeywords}
Differential Privacy, DP-SGD, Laplace, Majorization Theory
\end{IEEEkeywords}

\section{Introduction}



Training and fine-tuning deep learning models pose significant privacy risks. During training, adversaries can exploit gradient updates, model outputs, and pre-trained parameters to reconstruct sensitive data, making privacy a critical concern. Moreover, fine-tuning large pre-trained language models, such as BERT \cite{liu2019roberta} and GPT families \cite{yu2021differentially}, is essential for achieving state-of-the-art performance in various tasks, including sentence classification \cite{liu2019roberta}, text generation \cite{novikova2017e2e}, and code generation \cite{wang2018glue}. Data reconstruction attacks, such as Updates-Leak \cite{247690} and Inverting Gradients \cite{10.5555/3495724.3497145}, achieve success rates of up to 80\%, while the Dynamic Memory Model Inversion Attack (DMMIA) \cite{10.1145/3581783.3612072} enhances realism, reaching 93.54\% on FaceScrub. Additionally, prompt-based techniques further expose LLMs to training data extraction attacks, increasing the risk of recovering individual samples \cite{carlini2019}.

To mitigate privacy risks during model training and fine-tuning, differential privacy (DP) \cite{dwork2006differential} has become the de facto privacy model. A widely adopted method, Differentially Private Stochastic Gradient Descent (DP-SGD) \cite{abadi2016deep}, provides DP guarantees by ensuring that the inclusion or exclusion of any single data sample does not significantly impact the model's output. DP-SGD achieves this by clipping gradients to limit the influence of individual samples and adding Gaussian noise to the gradients within each batch. While DP-SGD effectively controls the privacy budget consumed over the thousands of iterations typically required for model training or fine-tuning, it often results in noticeable utility loss \cite{abadi2016deep,balle2018improving,gopi2021numerical}.


However, Gaussian noise exhibits the \textbf{\textit{privacy wall}} phenomenon~\cite{mironov2019r, TAN}, where DP-SGD with Gaussian perturbations fails to attain the optimal privacy–utility scaling $\sigma = \Theta(1/\varepsilon)$ in the medium privacy regime, resulting in suboptimal trade-offs. This occurs because the moments accounting function (MAF) for Gaussian noise follows a quadratic exponential form, causing accounted privacy loss to escalate rapidly as \(\lambda\) increases. Then, Gaussian-based DP-SGD may degrade model accuracy and slow convergence to some extent, particularly in \textbf{high-privacy regimes} (small \(\epsilon\)).

Alternatively, Laplace mechanism \cite{holohan2018bounded} has been shown to preserve accuracy better than the Gaussian mechanism in stricter privacy regimes, particularly in low-dimensional settings \cite{geng2015optimal, muthukrishnan2025differential}. This advantage could be highly beneficial for deep learning, by ensuring strong privacy guarantees across many training iterations. However, this improvement has not been fully realized, likely due to the destructive impact of $\ell_1$-norm clipping in high-dimensional spaces (DP-SGD with Laplace mechanism requires it by default). This limitation arises because, for an $\mathbf{n}$-dimensional gradient vector, the $\ell_1$ norm can be up to $\mathbf{\sqrt{n}}$ times larger than its $\ell_2$ norm. 

\vspace{-0.15in}

\begin{figure}[!h]
    \centering
    \includegraphics[width=0.8\linewidth, trim=20 210 20 180, clip]{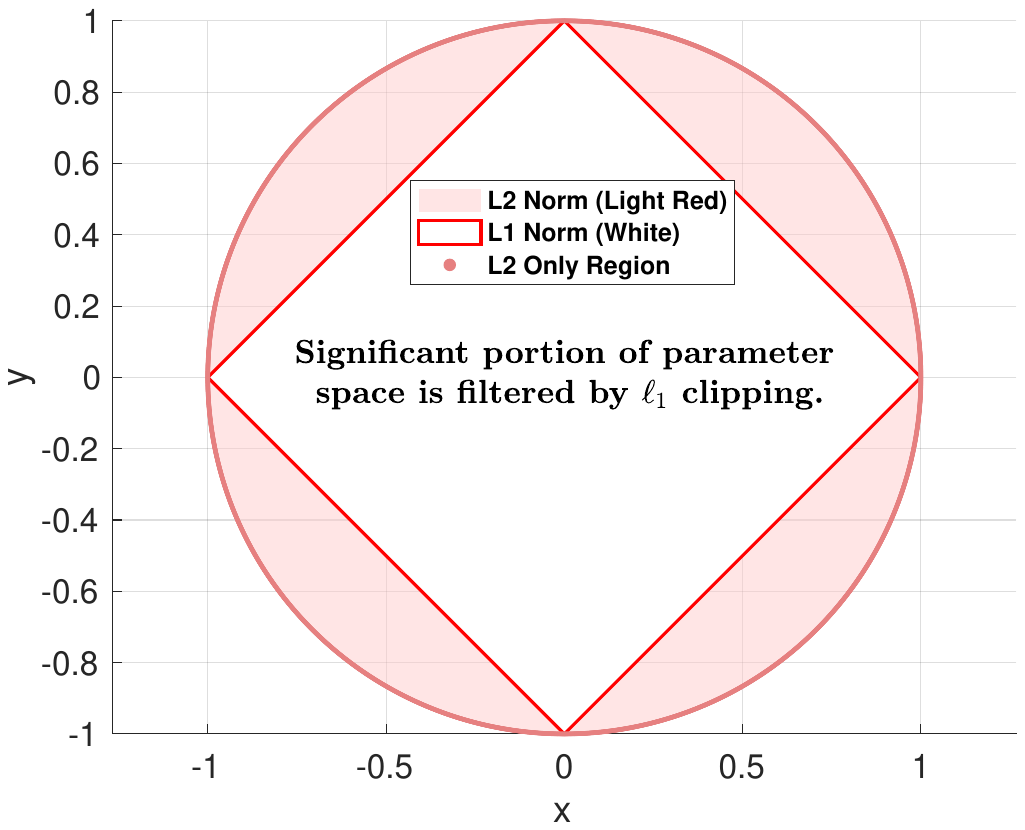}
    \caption{Comparison of $\ell_1$ and $\ell_2$ norm clipped spaces. 
    }
     \label{fig:clip}
\end{figure}

Both the Laplace and Gaussian mechanisms can be applied to vector-valued functions, but they differ fundamentally in how they interact with gradient clipping. The Gaussian mechanism is based on $\ell_2$ sensitivity and is naturally compatible with $\ell_2$ norm clipping, preserving a large feasible region for optimization—especially in high-dimensional settings. In contrast, the Laplace mechanism requires $\ell_1$ norm clipping due to its dependence on $\ell_1$ sensitivity. As illustrated in Figure~\ref{fig:clip}, this severely restricts the optimization region and leads to overly conservative updates, often degrading training performance under strong privacy constraints.

To resolve this mismatch, we propose a new privacy accounting framework for Laplace DP-SGD that supports $\ell_2$ norm clipping. A natural workaround (summing coordinate-wise privacy losses) can be overly pessimistic in high-dimensional settings, as it ignores the global $\ell_2$ constraint and accumulates leakage too conservatively. 

Our key insight is to instead apply \textit{majorization theory}~\cite{marshall1979inequalities}, a classical tool from inequality analysis, to derive a tighter, dimension-aware privacy bound. When the moments accountant is \textit{Schur-convex}, as in the case of the Laplace mechanism, we can replace the actual vector of per-parameter gradient magnitudes with a carefully constructed \textit{majorization set}, a worst-case configuration that dominates all valid $\ell_2$-clipped gradients. This allows us to compute moment bounds for each coordinate and sum them over the majorization set instead of the data-dependent gradients.

This yields a tight, dimension-aware upper bound on privacy loss that remains data-independent and scales gracefully with model size. As a result, our method, \textsc{Lap}\textsubscript{2}, enables high-utility training of large models such as RoBERTa-Large and ViT under strong privacy budgets (e.g., $\epsilon \leq 1$), making the Laplace mechanism a practical choice once again for modern DP-SGD workloads.

Beyond the theoretical contributions, we introduce a framework that enables practitioners to integrate \textsc{Lap}\textsubscript{2} into various AI applications by adapting noise parameters to task-specific needs. Our framework allows users to systematically compute the optimal Laplace noise scale ($b$) and clipping parameter ($C$) based on: (1) AI task specifications (epochs, batch size, model size), and (2) DP constraints ($\epsilon, \delta$). 

Therefore, the main contributions are summarized as below: 
\begin{enumerate}
    \item Taking the first step to advance the \textit{Laplace mechanism in DP-SGD} by mitigating its reliance on $\ell_1$ clipping through \textit{majorization theory}, enabling tighter privacy accounting under fixed distortion. 

    
    \item Introducing \textsc{Lap}\textsubscript{2} that offers \textit{plug-and-play DP-SGD} where users can seamlessly compute optimal clipping and noise parameters for their setups.

    \item Theoretically analyzing \textit{\textsc{Lap}\textsubscript{2}}'s privacy and performance, and comprehensively comparing the \textit{\textsc{Lap}\textsubscript{2}} and \textit{Gaussian} mechanisms in DP-SGD. 


    \item Conducting comprehensive empirical evaluations on both vision and language models, demonstrating that \textit{\textsc{Lap}\textsubscript{2}} consistently performs comparably to the Gaussian mechanism, with higher accuracy in certain strong-privacy regimes and larger model fine-tuning.
\end{enumerate}
Our evaluation code and additional experimental results are publicly available at \href{https://github.com/datasec-lab/lap2}{the LAP$_2$ evaluation repository}.
\section{Preliminaries}
\label{sec:prem}
We review some background on \DP and DP-SGD for the theoretical foundations of the \textsc{Lap}\textsubscript{2} framework.

\subsection{Differential Privacy}\label{section: DP def}
\label{def: differential privacy original}

Differential privacy ensures individual input privacy by introducing randomness into the output, 
either by injecting explicit noise or by leveraging inherent stochasticity in the mechanism or inputs%
\footnote{Developing a unified framework to capture all randomness sources and translate them into differential privacy guarantees remains an open challenge \cite{mironov2019r}. Here, we assume a fixed probability space \((\Omega, \mathcal{F}, P)\), where \(\Omega\) is the sample space, \(\mathcal{F}\) is a \(\sigma\)-algebra of events, and \(P\) is the probability measure.}
. 
Let $\mathcal{D}$ denote the space of datasets of interest.
A mechanism can be viewed as a probabilistic 
algorithm designed to answer a query $q$, which is a map  $q: \mathcal{D} \rightarrow \mathbb{R}^n$. We sometimes index the mechanism by the query $q$ of interest, denoting it as $M_q$. 
In particular, we denote mechanisms answering queries $q: \mathcal{D} \rightarrow \mathbb{R}$ 
by $M_q(\mathcal{D})$.
Additionally, we define a symmetric binary relation, $\mathrm{Adj}$, on $\mathcal{D}$, called \emph{adjacency}~\cite{dwork2006calibrating}. 
Two datasets $d \in \mathcal{D}$ and $d' \in \mathcal{D}$ are adjacent, denoted $\mathrm{Adj}(d, d')$, if and only if they differ by the data of a single participant.
The standard definitions of differential privacy introduced in~\cite{DKM+06,DMNS06} will follow.
\begin{definition} [Differential Privacy \cite{Dwork06}]
A randomization mechanism $M_q: \mathcal{D} \times \Omega 
\to \mathbb{R}^n$ which is $\epsilon$-differentially private, necessarily
randomizes its output in such a way that for all adjacent datasets $d,\ d' \in \mathcal{D}$ and $S \in \mathbb{R}^n$, 
\begin{align}\label{eq: standard def approximate DP original}
\Prob(M_q(d) \in S) \leq e^{\epsilon} \Prob(M_q(d') \in S). \;\; 
\end{align}
\end{definition}

 If the inequality fails, an \(\epsilon\)-breach occurs, indicating 
 a non-negligible probability to distinguish the presence or absence
 of any single $d_n=d-d'$.
 The following norms will be frequently used throughout the remainder of the paper.

 \begin{definition}
The \(\ell_1\) and \(\ell_2\) norms of a vector \( x \in \mathbb{R}^n \) are defined as:
\(
\|x\|_1 = \sum_{i=1}^{n} |x_i|, \quad
\|x\|_2 = \left( \sum_{i=1}^{n} x_i^2 \right)^{\frac{1}{2}}
\)
where \( x = (x_1, x_2, \dots, x_n) \) is a vector in \( \mathbb{R}^n \).
\end{definition}

 We now recall a fundamental mechanism that provides \(\epsilon\)-differential privacy when answering queries.

\begin{definition}[Laplace Mechanism \cite{dwork2006our}]
\label{def:LapMech}
Given a numerical query \(q(d)\) (whose output lies in \(\mathbb{R}^n\)) and a scale \(b\), 
the Laplace mechanism \(M_q(d,b)\) modifies \(q(d)\) by adding noise 
\(w \sim \mathrm{Lap}(b)\), i.e., \(M_q(d,b) = q(d) + w\).
\end{definition}

Recall that the (univariate) Laplace distribution with mean zero and scale \(b\), denoted \(\mathrm{Lap}(b)\),
has density \(p(x; b) = \tfrac{1}{2b}\,\exp\bigl(-|x|/b\bigr)\) and variance \(2b^2\).
For \(w \in \mathbb{R}^n\) with i.i.d.\ components \(w_i \sim \mathrm{Lap}(b)\), 
the joint density is \(\bigl(\tfrac{1}{2b}\bigr)^n \exp(-\|w\|_1/b)\) and $\|\cdot\|_1$ denotes the  $\ell_1$ norm \cite{Walker1965ProbabilityTA}.


The scale parameter in the Laplace mechanism determines the degree of privacy. Specifically:

\begin{theorem}[Laplace Mechanism \cite{Dwork10}]
\label{thm:LapMech}
Let \(q : \mathcal{D} \to \mathbb{R}\) be a query with (global) $\ell_1$-sensitivity 
$ \Delta_1 q
  \;=\;
  \max_{d,d' : \mathrm{Adj}(d,d')} 
  \,\bigl\|\,q(d) - q(d')\bigr\|_{1}$.
If the Laplace mechanism \(M_q(d, b)\) uses a scale parameter 
\(b \ge \tfrac{\Delta q}{\epsilon}\), 
then \(M_q(d, b)\) is \(\epsilon\)-differentially private.
\end{theorem}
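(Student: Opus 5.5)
The plan is the standard density-ratio argument. Although the statement is written for \(q:\mathcal{D}\to\mathbb{R}\), the sensitivity is an \(\ell_1\) norm, so I will prove it directly for \(q:\mathcal{D}\to\mathbb{R}^n\) with i.i.d.\ noise \(w_i\sim\mathrm{Lap}(b)\); the scalar case is \(n=1\).

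Fix adjacent datasets \(d,d'\) and write \(p_d\) and \(p_{d'}\) for the output densities of \(M_q(d,b)\) and \(M_q(d',b)\). By the joint density recalled after Definition~\ref{def:LapMech},
\[
p_d(z)=\Bigl(\tfrac{1}{2b}\Bigr)^n \exp\bigl(-\|z-q(d)\|_1/b\bigr).
\]
The key step is a pointwise bound on the ratio. For every \(z\in\mathbb{R}^n\),
\[
\frac{p_d(z)}{p_{d'}(z)}=\exp\Bigl(\frac{\|z-q(d')\|_1-\|z-q(d)\|_1}{b}\Bigr)\le \exp\Bigl(\frac{\|q(d)-q(d')\|_1}{b}\Bigr).
\]
The inequality is the reverse triangle inequality for \(\|\cdot\|_1\). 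Its right-hand side is at most \(\exp(\Delta_1 q/b)\) by the definition of global \(\ell_1\)-sensitivity, and this is at most \(e^{\epsilon}\) because \(b\ge \Delta_1 q/\epsilon\).

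To pass from densities to events, take any measurable set \(S\). Integrating the pointwise bound over \(S\) gives
\[
\Prob(M_q(d)\in S)=\int_S p_d\le e^{\epsilon}\int_S p_{d'}=e^{\epsilon}\,\Prob(M_q(d')\in S),
\]
which is exactly the inequality in the definition of \(\epsilon\)-differential privacy. Adjacency is symmetric, so the same argument with \(d\) and \(d'\) swapped gives the other direction.

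No step is a real obstacle. The only points needing care are notational. First, \(\Delta q\) in the statement must be read as the \(\ell_1\)-sensitivity \(\Delta_1 q\). Second, the argument depends on the noise being i.i.d.\ Laplace across coordinates, because that is what makes the joint density a function of the \(\ell_1\) distance and lets the triangle inequality apply.
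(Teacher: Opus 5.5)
Your proof is correct and is the standard density-ratio argument. The paper does not prove this cited result itself, but its proof of Theorem~\ref{thm:lap2privacyloss} uses the same steps: the ratio of Laplace densities reduces to a difference of $\ell_1$ distances divided by $b$, which is then bounded by the triangle inequality, with your added integration over $S$ and your reading of $\Delta q$ as $\Delta_1 q$ completing it for $\mathbb{R}^n$ with i.i.d.\ coordinates.
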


Following the introduction of differential privacy~\cite{DMNS06, Dwork06}, various relaxations were proposed to capture the properties of Gaussian additive noise. Among these, approximate DP introduced an additive \(\delta\) term for Gaussian noise analyses which is typically of the order of $\frac{1}{|\mathcal{D}|}$.


\begin{definition}[Approximate Differential Privacy~\cite{dwork2006calibrating}]
\label{def:epsDP}
 A randomization mechanism \(M_q: \mathcal{D} \to \mathcal{R}\) is said to provide \((\epsilon, \delta)\)-DP for releasing the query results \(q(\mathcal{D})\) if it randomizes its output such that, for any two adjacent datasets $d,\ d' \in \mathcal{D}$  and all subsets \(S \subseteq \mathbb{R}^n\), the following holds:
\begin{align}
\label{def:DP}
\Prob(M_q(d) \in S) \leq e^{\epsilon} \Prob(M_q( d') \in S) + \delta,
\end{align}
where $\delta$ represents the failure probability of the $\epsilon$-DP guarantee provided by the randomization mechanism.
\end{definition}

\begin{definition}[Gaussian Mechanism \cite{dwork2006calibrating}]
    The Gaussian mechanism $M_q(d, \sigma)$ modifies the answers to query $q$ in the dataset $d$ by adding noise $w \sim \mathcal{N}(0, \sigma^2 I_n)$, that is, $M_q(d, \sigma) = q(d) + w$. 
\end{definition}
The Gaussian mechanism ensures approximate differential privacy. In particular, for 
\(\epsilon < 1\), its standard deviation is given by $\sigma \;=\; \frac{\Delta_2 q}{\epsilon} \,\sqrt{2\,\ln\bigl(\tfrac{1.25}{\delta}\bigr)}$, where \(\Delta_2 q\) is the \(\ell_2\)-sensitivity of the query \(q\) across all adjacent datasets.

Sequentially applying DP mechanisms increases the overall privacy cost, as shown by
various composition theorems~\cite{dwork2010boosting,7883827}. Naive composition states that
chaining \(k\) mechanisms, each \((\epsilon,\delta)\)-DP, results in an overall
\((k\epsilon, k\delta)\)-DP guarantee, which can be overly conservative.
Stronger composition theorems, such as the \emph{advanced composition} theorem~\cite{kairouz2015composition},
provide tighter bounds. In particular, the overall DP guarantee can become $\Bigl(\epsilon \sqrt{2k \,\ln\! \ \bigl(\tfrac{1}{\delta}\bigr)} 
  \;+\; k\,\epsilon\,\bigl(e^\epsilon - 1\bigr)\Bigr)$.



\subsection{Deep Learning with Differential Privacy}
\label{sec:DP-DPSGD}
DP-SGD~\cite{abadi2016deep} is the first method that integrates the guarantees of differential privacy (DP) into deep neural network (DNN) training. At each iteration \(t\), the gradient \(\mathbf{g}_t(\mathbf{x})\) for a data point \(\mathbf{x}\) is clipped using a threshold \(C\). Formally, the clipped gradient is defined as:
\[
\mathbf{g}_C(\mathbf{x})= \frac{\mathbf{g}_t(\mathbf{x})}{\max\Bigl(1, \tfrac{\|\mathbf{g}_t(\mathbf{x})\|_2}{C}\Bigr)}.
\]
This clipping step ensures that the sensitivity of each gradient is bounded with respect to the inclusion of individual samples in the training set, thereby preparing it for the perturbation with a Gaussian mechanism. 
\begin{equation}
\label{eqn:gradientperturb}
    \tilde{\mathbf{g}}(\mathbf{x})= \mathbf{g}_C(\mathbf{x}) + \mathcal{N}(0, C^2 \sigma^2 \mathbf{I}_n).
\end{equation}
The final update \(\tilde{\mathbf{g}}\) is computed by averaging \( \tilde{\mathbf{g}}(\mathbf{x}) \) over the batch of size \(L\). Even with tighter composition bounds, training over many iterations (e.g., thousands of rounds) can lead to high cumulative privacy loss (\(\epsilon\)). 
DP-SGD mitigates this by formulating an accounting function over the privacy loss terms across rounds, namely the \emph{Moments Accountant Function}. Consequently, DP-SGD with moments accounting achieves a significantly improved bound of $\mathcal{O}(\epsilon \sqrt{T} L/|\mathcal{D}| , \delta)$-DP. 

\subhead{Moments Accounting Function}
\label{sec:MomentAccounting}
Consider two neighboring datasets \(d,d'\in \mathcal{D}\) and an outcome \(o\in\mathbb{R}^n\) 
from the sanitized gradients \(\tilde{\mathbf{g}}(\cdot)\). 
The \emph{privacy loss} w.r.t. \(o\) is defined by
\begin{align}
\label{eq:PrivLoss:c}
  c(o; \text{aux}, d, d') 
  \;=\; 
  \log\! \ \Bigl[
    \frac{\Prob(\tilde{\mathbf{g}}(\text{aux}, d)=o)}{\Prob(\tilde{\mathbf{g}}(\text{aux}, d')=o)}
  \Bigr].
\end{align}

DP-SGD ensures privacy by focusing on orthogonal updates relative to the ``aux'' state, which represents the model's state \textbf{prior to} training on a specific data instance 
$x_{i<N}$. This instance has been sequentially updated through the iterative application of differentially private mechanisms \cite{abadi2016deep}.

%
%
\color{black}
Let $\mu_0$ denote the probability density function
(pdf) of $\mathcal{N}(0, \sigma^2)$, and $\mu_1$ denote the pdf of $\mathcal{N}(1, \sigma^2)$. 
Let $\mu$ denote the mixture of two Gaussians, $\mu = (1-\zeta)\mu_0 + \zeta \mu_1$, where $\zeta=L/\max\{d,d'\}$ 
denotes the batch sampling rate.
Clipping with threshold $C=1$, the privacy loss of DP-SGD with Gaussian noise is bounded \cite{abadi2016deep} by:
%
\begin{align}
\label{eq:PrivLoss:c}
  c(o; \text{aux}, d, d') 
  \;\leq\; 
  \log\! \ \Bigl[ \max{ \left\{
    \frac{\mu_0(o)}{\mu(o)} , \frac{\mu(o)}{\mu_0(o)} \right\}}
  \Bigr].
\end{align}

DP-SGD separates clipping from privacy accounting. 
To incorporate the impact of the clipping parameter $C$, DP-SGD scales the
Gaussian noise's variance by $C^2$ per Equation \ref{eqn:gradientperturb} after accounting, without directly influencing privacy accounting itself. To bound the privacy loss more tightly, the moments of the random variable associated with \( c(o; \text{aux}, d, d') \) are tightly calculated and bounded. Precisely, for \(\lambda>0\), define
%
%

\begin{align}
\label{eq:lambda-moment-gen}
  \alpha_{M}(\lambda;\mathrm{aux},d,d') 
  = \log\,\mathbb{E}_{o \sim M(\mathrm{aux},d)}
    \!\bigl[\exp\!\bigl(\lambda\,c(o;\mathrm{aux},d,d')\bigr)\bigr].
\end{align}

We then take the worst-case over \(\mathrm{aux}, d, d'\):
\begin{align}
\label{alpha func}
  \alpha_{M}(\lambda)
  \;=\;
  \max_{\mathrm{aux},\,d,\,d'} 
 \Bigl( \alpha_{M}(\lambda;\mathrm{aux},d,d') \Bigr).
\end{align}

Key properties of this \textit{Moments Accounting Function (MAF)} include \cite{abadi2016deep}:

\begin{enumerate}
    \item \textbf{Composability.} 
    For a sequence of adaptive mechanisms \(M_1,\dots,M_t\),
     \begin{align}
    \label{composmaf}
      \alpha_{M_1,\dots,M_t}(\lambda) 
      \;\le\; 
      \sum_{i=1}^t 
      \alpha_{M_i}(\lambda).
    \end{align}
    \item \textbf{Tail Bound.}
    For any \(\epsilon>0\), a sequence of adaptive mechanisms \(M_1,\dots,M_t\)
    is jointly \((\epsilon,\delta)\)-DP for  
    \begin{align}
    \label{tail}
      \delta 
      \;=\;
      \min_{\lambda>0} 
     \Bigl( \exp\! \ \bigl( 
     \alpha_{M_1,\dots,
     M_t}
     (\lambda) \;-\;\lambda\,\epsilon\bigr)\Bigr).
    \end{align}
\end{enumerate}

\subhead{Privacy Amplification with Subsampling} While the MAF significantly tightens the privacy budget, 
DP-SGD's main strength arises from \emph{privacy amplification via random sampling}, 
where the privacy cost per evaluation decreases \emph{quadratically}—rather than linearly—with 
the sampling rate
\(\zeta = \tfrac{L}{|d|}\).
%
For instance,  Bun et al.~
\cite{BDRS18} showed that 
\(\alpha(\lambda) \propto \zeta^2 \,\tfrac{6\,\lambda}{\sigma^2}\). Mironov et al.~
\cite{mironov2019r} (Section 3.3) presented a tighter privacy amplification bound, which remains the state-of-the-art for the subsampled Gaussian mechanism. The bound is expressed as:
\begin{equation}
\label{A_alpha}
\alpha_{M_1,\dots,M_t}(\lambda) 
\leq t \cdot \log  \sum_{\eta=0}^{\lambda+1} \binom{\lambda}{\eta} (1 - \zeta)^{\lambda+1 - \eta} \zeta^{\eta} \cdot \exp\left(\frac{\eta^2 - \eta}{2\sigma^2}\right).
\end{equation}

 The final \((\epsilon, \delta)\)-DP guarantee is derived using the tight conversion formula provided by Balle et al.~\cite{pmlr-v108-balle20a}:

\small
\begin{equation}
    \epsilon(\delta) = \min_{\lambda > 0} \Bigl(\frac{\alpha_{M_1,\dots,M_t}(\lambda)}{\lambda} + \log\left(\frac{\lambda}{\lambda + 1}\right) - \frac{\log(\delta) + \log(\lambda + 1)}{\lambda}\Bigr). \nonumber
\end{equation}
\normalsize

\noindent
This significantly enhances the analysis of privacy loss in scenarios involving repeated subsampling and Gaussian noise. 

\section{Problem Statement}
\label{sec:methodology}

DP-SGD predominantly relies on the Gaussian mechanism, with Laplace noise only applied in limited cases \cite{Sommer2019PrivacyLC, 10.1145/3447548.3467268}. The fundamental limitation of the Laplace mechanism in these settings stems from its privacy loss, which requires $\ell_1$-clipping. This clips gradients more aggressively than $\ell_2$-clipping, especially in higher dimensions, resulting in less informative updates and making it less suitable for DP-SGD. 


To formalize this difference, consider the well-known relationship between $\ell_1$ and $\ell_2$ norms: $\|x\|_1 \leq \sqrt{n} \|x\|_2$, which follows from the Cauchy–Schwartz inequality~\cite{steele2004cauchy}. The difference in retained volume after $C$-$\ell_1$ vs. $C$-$\ell_2$ clipping corresponds to the ratio of the respective clipped spaces in $\mathbb{R}^n$: a cross-polytope with volume $V_{\ell_1}(n, C) = \frac{(2C)^n}{n!}$ and a Euclidean ball with volume $V_{\ell_2}(n, C) = \frac{\pi^{n/2} C^n}{\Gamma(n/2 + 1)}$. Their ratio simplifies to  

\[
\frac{V_{\ell_1}(n, C)}{V_{\ell_2}(n, C)} = \left(\frac{2}{\sqrt{\pi}}\right)^n \frac{(\frac{n}{2})!}{n!},
\]

which shrinks exponentially as $n$ increases, highlighting the severe loss of preserved vectors under $\ell_1$ clipping. 

Figure~\ref{fig:l1vsl2vol} depicts the volume differences between $C$-$\ell_1$ and $C$-$\ell_2$ norm balls across increasing dimensionality. As the number of model parameters $n$ increases, these volumes diverge at an astronomical rate, making $\ell_1$ clipping significantly more restrictive than $\ell_2$ clipping in high dimensions by discarding a much larger proportion of gradients. Even with larger clipping thresholds, this geometric gap remains substantial for $n > 20$.

\begin{figure}[ht] 
    \centering
    \includegraphics[width=0.7\linewidth, trim=40 180 70 212, clip]{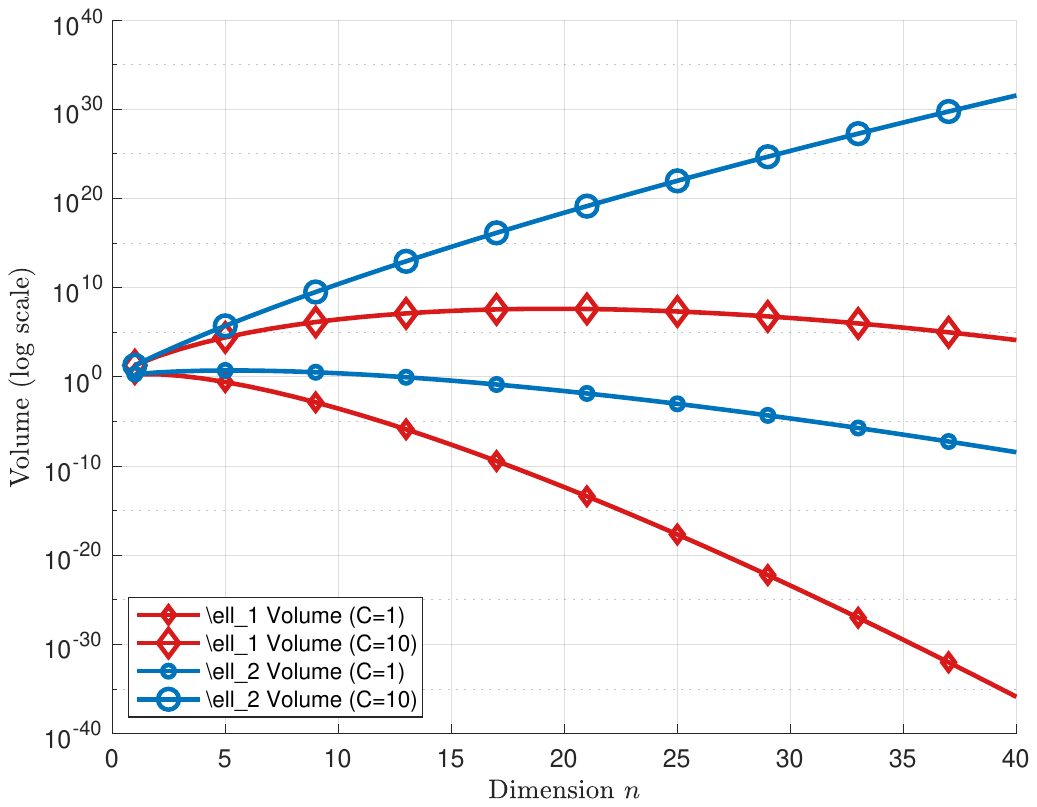}
    \caption{Volume of $n$-dimensional vector space clipped by $C$-$\ell_1$ and $C$-$\ell_2$ norms.}
    \label{fig:l1vsl2vol}
    \vspace{-10pt}
\end{figure}

This geometric bottleneck directly impacts the applicability of the Laplace mechanism for deep learning with DP. When coupled with $\ell_1$-norm clipping, the mechanism confines updates to a cross-polytope whose volume decays exponentially with $n$, making it unsuitable for high-dimensional gradients.

We formalize the privacy loss induced under this setup.

\begin{theorem}[Privacy Loss of a Laplace Mechanism]
\label{thm:lap2privacyloss}
The privacy loss of a Laplace mechanism with scale parameter $b$, applied to gradients clipped under the $\ell_1$ norm with threshold $C$, is bounded by $c(o; \text{aux}, d, d', b) \leq \frac{C}{b}$.
\end{theorem}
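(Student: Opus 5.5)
The plan is to write the privacy loss explicitly from the joint Laplace density and then bound it with the reverse triangle inequality. Fix $\mathrm{aux}$ and adjacent datasets $d,d'$. The released quantity is $\tilde{\mathbf{g}} = \mathbf{g}(d) + w$, where $w$ has i.i.d.\ $\mathrm{Lap}(b)$ coordinates. As recalled after Definition~\ref{def:LapMech}, the density of $w$ is $(2b)^{-n}\exp(-\|w\|_1/b)$. Hence, for any outcome $o\in\mathbb{R}^n$,
\[
c(o;\mathrm{aux},d,d',b) \;=\; \log\frac{(2b)^{-n}\exp\bigl(-\|o-\mathbf{g}(d)\|_1/b\bigr)}{(2b)^{-n}\exp\bigl(-\|o-\mathbf{g}(d')\|_1/b\bigr)} \;=\; \frac{\|o-\mathbf{g}(d')\|_1-\|o-\mathbf{g}(d)\|_1}{b}.
\]
The normalizing constants cancel because both hypotheses use the same scale $b$.

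Next, apply the triangle inequality for the $\ell_1$ norm to the numerator. This gives $\|o-\mathbf{g}(d')\|_1-\|o-\mathbf{g}(d)\|_1 \le \|\mathbf{g}(d)-\mathbf{g}(d')\|_1$, uniformly in $o$. It then remains to bound the $\ell_1$-sensitivity $\Delta_1$ of the clipped-gradient query by $C$. Following the DP-SGD convention of Section~\ref{sec:DP-DPSGD}, the privacy accounting is done "orthogonally" relative to $\mathrm{aux}$: the two adjacent datasets differ in a single sample, whose contribution is a single gradient $\mathbf{g}_C(\mathbf{x})$ with $\|\mathbf{g}_C(\mathbf{x})\|_1\le C$ after $\ell_1$ clipping. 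The difference $\mathbf{g}(d)-\mathbf{g}(d')$ is exactly that one clipped gradient, with the contributions of all other samples in the batch cancelling. So $\|\mathbf{g}(d)-\mathbf{g}(d')\|_1\le C$, and therefore $c\le C/b$. This is the same computation that underlies Theorem~\ref{thm:LapMech} with $\Delta_1 q = C$ and $\epsilon = C/b$.

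The main point needing care is the sensitivity step. I need to make precise which adjacency notion and which normalization are in force. Under add/remove adjacency the difference is a single clipped gradient, of norm at most $C$. Under replace-one adjacency the difference would be $\mathbf{g}_C(\mathbf{x})-\mathbf{g}_C(\mathbf{x}')$, whose norm can reach $2C$. I would state the add/remove convention explicitly, consistent with the mixture $\mu=(1-\zeta)\mu_0+\zeta\mu_1$ used for the Gaussian accounting in Section~\ref{sec:DP-DPSGD}, where the differing sample shifts the mean by at most the clipping bound. Under that convention the sensitivity is $C$. Averaging over the batch, or scaling the noise by $C$ after accounting, only rescales both sides consistently and does not change the bound.
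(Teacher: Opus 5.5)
Your proof is correct and follows essentially the same argument as the paper's: take the log-ratio of the two product-Laplace densities, bound the resulting difference of $\ell_1$ norms by the triangle inequality, and use $\|\mathbf{g}_C(\mathbf{x})\|_1 \le C$. The only differences are cosmetic: the paper places the mean without $\mathbf{x}$ at zero, which is equivalent to your translation by $\mathbf{g}(d')$, whereas you keep general means $\mathbf{g}(d),\mathbf{g}(d')$ and state the add/remove adjacency convention explicitly (the paper uses it only implicitly).
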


    
\begin{proof}
Let \( \tilde{\mathbf{g}}(\text{aux},\mathbf{x},b) \) denote the output of a Laplace mechanism with scale \( b \). The privacy loss at an outcome \( o \) is defined as:
\begin{align}
\label{eq:marginal}
    c(o; \text{aux}, d, d', b) 
    &= \log \frac{\Pr[\tilde{\mathbf{g}}(\text{aux},\mathbf{x},b) = o]}{\Pr[\tilde{\mathbf{g}}(\text{aux},-,b) = o]},
\end{align}
where \( \tilde{\mathbf{g}}(\text{aux}, -, b) \) denotes the outcome of the mechanism without access to the input \( \mathbf{x} \) (e.g. $\mathbf{x}$ is only in one data set). 
Thus, without loss of generality, the denominator follows a zero-mean Laplace PDF, as in the worst-case setting it lacks any gradient component in direction of $\mathbf{g}_C$, while the numerator is centered around \( \mathbf{g}_C\).
\begin{align}
& c(o; \text{aux}, d, d',b)= \log \frac{\text{Lap}(\mathbf{g}_C(\mathbf{x}), b I_n )}{\text{Lap}(0,b I_n)}\nonumber 
  \\
    %
    %
    &= \log \frac{ \left(\frac{1}{2b}\right)^n \exp\left(-\frac{\|o - \mathbf{g}_C(\mathbf{x}) \|_1}{b}\right)  }{ \left(\frac{1}{2b}\right)^n \exp\left(-\frac{\|o\|_1}{b}\right)   }= \frac{\|o \|_1 - \|o - \mathbf{g}_C(\mathbf{x})\|_1}{b} \nonumber.
\end{align}
For all real valued vectors $A$ and $B$, using $\|A\|_1 = \|(A-B)+B\|_1$, by the triangle inequality  we have:  $\|A\|_1 - \|B\|_1 \leq \|A-B\|_1$. Thus, \begin{equation}
\label{Laplacebound}
   \frac{\|o \|_1 - \|o - \mathbf{g}_C(\mathbf{x}) \|_1}{b} \leq \frac{\|\mathbf{g}_C(\mathbf{x})\|_1}{b} 
\end{equation}
Evaluating the  $\ell_1$ norm in terms of the elements of  
$\mathbf{g}_C(\mathbf{x})=[\mathbf{g}_1, \mathbf{g}_2,\cdots, \mathbf{g}_n]$, we have: 
\begin{equation}
\label{Laplacebound}
   c(o; \text{aux}, d, d', b) \leq \frac{\sum_{j=0}^n |\mathbf{g}_j|}{b}. 
\end{equation}
Thus, this completes the proof.
\end{proof}
While $c(o; \text{aux}, d, d', b) \leq \frac{C}{b}$ ensures bounded loss under $\ell_1$-clipping, applying it to gradients originally bounded in $\ell_2$ (e.g., $\|x\|_2 \leq C$) requires using the inequality $\|x\|_1 \leq \sqrt{n} \|x\|_2$. This inflates the effective $\ell_1$ clipping threshold to $\sqrt{n}C$, thereby scaling the privacy loss as $\frac{\sqrt{n}C}{b}$. As a result, naïvely using Laplace noise on $\ell_2$-clipped gradients leads to privacy degradation proportional to $\sqrt{n}$.

Thus, we pose the following problem.

\begin{center}
\begin{tcolorbox}[
    colback=white!97!gray,
    colframe=black!70,
    coltitle=white,
    colbacktitle=black,
    fonttitle=\bfseries\sffamily,
    title=Problem Statement,
    boxrule=0.8pt,
    width=0.46\textwidth,
    sharp corners=south
]
\rmfamily
Can we design a Laplace-based mechanism $\mathcal{M}$ that operates over $\ell_2$-clipped vectors, i.e., $x \in \mathbb{R}^n$ with $\|x\|_2 \le C$, while avoiding the $\sqrt{n}$ privacy cost overhead induced by $\|x\|_1 \le \sqrt{n}\|x\|_2$?  

That is, can
\[
\mathcal{M}(x) \sim \mathrm{Lap}_{B_2(C)}(x)
\]
achieve strong $(\epsilon,\delta)$-privacy without worst-case $\sqrt{n}$ degradation?
\end{tcolorbox}
\end{center}
\subsection{DP-SGD with a \textsc{Lap}\textsubscript{2} Mechanism}
To address the dimensional sensitivity problem stated earlier, we first present the moments accountant function for a subsampled uni-variate Laplace mechanism—where $\ell_1$ and $\ell_2$ clipping behave similarly at the coordinate level.

See Appendix~\ref{subsampledLap2} for the proof.

\begin{theorem}[Subsampled Uni-variate Laplace Mechanisms]
\label{MAflaplace}
Let $M$ be a uni-variate Laplace mechanism with scale parameter $b$ and sampling probability 
$\zeta = \frac{L}{N}$,
where $L$ is the mini-batch size and 
$N$ 
is the dataset size. Suppose $M$ is applied to a single 
partial derivative query (i.e. with respect to a single model parameter)
$q(d)=\mathbf{g}(d)$ clipped by threshold $C$, i.e., $|\mathbf{g}| \leq C$.

Then, the moments accountant function of $M_{q}$ satisfies
\begin{equation}
\label{lap_2account1}
\alpha_{M_{q}}(\lambda) = \log \left[
\sum_{\eta = 0}^{\lambda + 1} \binom{\lambda + 1}{\eta} (1-\zeta)^{\lambda + 1 - \eta} \zeta^\eta F(C, \eta)
\right],
\end{equation}

where
\begin{equation}
\label{eqn:G}
F(C, \eta) = \frac{\eta e^{\frac{(\eta-1)C}{b}} + (\eta-1) e^{-\frac{\eta C}{b}}}{2\eta-1}.
\end{equation}
\end{theorem}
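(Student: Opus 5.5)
The plan is to follow the template of the subsampled Gaussian analysis (Abadi et al.; Mironov et al., \eqref{A_alpha}), replacing the Gaussian densities by Laplace densities. For a single coordinate, the worst-case pair of adjacent datasets (as in the proof of Theorem~\ref{thm:lap2privacyloss}) gives outputs distributed as $\mu_0=\mathrm{Lap}(0,b)$ under $d'$ and as the mixture $\mu=(1-\zeta)\mu_0+\zeta\mu_1$ under $d$. Here $\mu_1=\mathrm{Lap}(g,b)$ with $|g|\le C$. By the symmetry of the Laplace density we may take $g\ge 0$. We take $\lambda$ to be a positive integer, as in \eqref{A_alpha}.

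\textbf{Step 1: express the moment as a binomial sum.} By \eqref{eq:lambda-moment-gen}, a change of measure gives
\[
e^{\alpha(\lambda)}=\mathbb{E}_{o\sim\mu}\bigl[(\mu/\mu_0)^{\lambda}\bigr]=\mathbb{E}_{o\sim\mu_0}\bigl[(\mu/\mu_0)^{\lambda+1}\bigr].
\]
Writing $\mu/\mu_0=(1-\zeta)+\zeta\,\mu_1/\mu_0$ and expanding binomially gives
\[
e^{\alpha(\lambda)}=\sum_{\eta=0}^{\lambda+1}\binom{\lambda+1}{\eta}(1-\zeta)^{\lambda+1-\eta}\zeta^{\eta}\,\mathbb{E}_{\mu_0}\bigl[(\mu_1/\mu_0)^{\eta}\bigr].
\]
It therefore suffices to show that $\mathbb{E}_{\mu_0}[(\mu_1/\mu_0)^\eta]=\int\mu_1^{\eta}\mu_0^{1-\eta}$ equals $F(g,\eta)$. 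We then argue monotonicity in $g$ so the worst case is $g=C$.

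\textbf{Step 2: evaluate the integral explicitly.} The integral is
\[
\tfrac{1}{2b}\int\exp\bigl(-(\eta|x-g|+(1-\eta)|x|)/b\bigr)\,dx .
\]
We split it into three regions.
\begin{itemize}
\item On $x<0$ the exponent is $-(\eta g-x)/b$, so this piece contributes $b\,e^{-\eta g/b}$.
\item On $x>g$ the exponent is $-(x-\eta g)/b$, so this piece contributes $b\,e^{(\eta-1)g/b}$.
\item On $0<x<g$ the exponent is linear, $-(\eta g+(1-2\eta)x)/b$, so this piece contributes $\frac{b}{1-2\eta}\bigl(e^{-\eta g/b}-e^{(\eta-1)g/b}\bigr)$. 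This is well defined because $\eta$ is an integer, so $\eta\neq\tfrac12$.
\end{itemize}
Multiplying the sum by $\tfrac{1}{2b}$ and collecting coefficients gives $\tfrac12\bigl(1-\tfrac{1}{1-2\eta}\bigr)=\tfrac{\eta}{2\eta-1}$ on $e^{(\eta-1)g/b}$ and $\tfrac12\bigl(1+\tfrac{1}{1-2\eta}\bigr)=\tfrac{\eta-1}{2\eta-1}$ on $e^{-\eta g/b}$. This is exactly $F(g,\eta)$ in \eqref{eqn:G}. As sanity checks, $F=1$ for $\eta=0$ and $\eta=1$.

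\textbf{Step 3: worst case and the other direction.} Differentiating in $g$ gives
\[
\partial_g F=\tfrac{\eta(\eta-1)}{(2\eta-1)b}\bigl(e^{(\eta-1)g/b}-e^{-\eta g/b}\bigr)\ge0
\]
for integer $\eta\ge1$ and $g\ge0$. So each summand is maximized at $|g|=C$, and the maximum over $\mathrm{aux},d,d'$ in \eqref{alpha func} is attained there, yielding \eqref{lap_2account1}.

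\textbf{Main obstacle.} The delicate step is the reverse privacy-loss direction $\mathbb{E}_{\mu_0}[(\mu_0/\mu)^{\lambda}]$, i.e., the removal case in \eqref{eq:PrivLoss:c}. I would handle it as Abadi et al.\ and Mironov et al.\ do for the Gaussian, showing it is dominated by the direction computed above. The most direct route is to bound $(\mu_0/\mu)^{\lambda}$ through convexity of $t\mapsto t^{-\lambda}$ on the mixture. An alternative is to use the known result that, for subsampled mechanisms whose base pair is symmetric (here $\mathrm{Lap}(0,b)$ versus $\mathrm{Lap}(C,b)$, reflected about $C/2$), the "add" direction dominates. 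If this domination fails for some parameters, the statement should be read as the moment of the dominant direction, and I would make that explicit.
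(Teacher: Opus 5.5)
Your Steps 1--3 follow the paper's proof essentially step for step. The paper uses the same mixture $\mu=(1-\zeta)\mu_0+\zeta\mu_1$, the same change of measure and binomial expansion of $A=\mathbb{E}_{\mu_0}[(\mu/\mu_0)^{\lambda+1}]$, and the same three-interval evaluation of $\mathbb{E}_{\mu_0}[(\mu_1/\mu_0)^\eta]$ that yields $F$. Your derivative argument for monotonicity in $g$ actually justifies the worst case $|g|=C$, which the paper only asserts ``by inspection''.

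The one thing to fix is your ``most direct route'' for the reverse direction, which would fail. Jensen's inequality for $t\mapsto t^{-\lambda}$ only gives
$\mathbb{E}_{\mu_0}[(\mu_0/\mu)^{\lambda}]\le 1+\zeta\bigl(\mathbb{E}_{\mu_0}[(\mu_0/\mu_1)^{\lambda}]-1\bigr)=1+\Theta(\zeta)$.
By contrast, $A=1+O(\zeta^2)$, because $F(C,0)=F(C,1)=1$. So for small $\zeta$ this bound exceeds $A$ and cannot show domination. The reason is that it discards the first-order cancellation $\mathbb{E}_{\mu_0}[\mu_1/\mu_0]=1$, which is exactly where the amplification comes from.

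Use your second route instead: the domination result of Mironov et al.\ for reflection-symmetric base pairs. That is precisely what the paper relies on for $B\le A$, and it does so by citation only, without proof.
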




While the uni-variate analysis in Equation~\ref{lap_2account1} provides an exact expression for each parameter, directly summing over millions of parameters can still lead to significant overestimation of the total privacy loss unless a tight holistic (multi-variate) bound is applied.%
\footnote{Here, the summation is over model parameters rather than training/fine-tuning iterations, similar to how composability traditionally applies over iterations.  
For the overall training/fine-tuning process, we would also need to compose over iterations.}

To address this overestimation, we apply \emph{Majorization Theory}~\cite{marshall1979inequalities,b81119c4-48ac-33d6-b3b8-e1de6925b554}, a powerful tool for comparing vectors based on their spread or concentration. After $\ell_2$ clipping with threshold $C$, the marginal clipped gradients $|\mathbf{g}_i|$ are mostly small but unevenly distributed. To capture this structure, we construct a \textit{majorization set}—a specially ordered vector that dominates the original gradient vector in a formal sense.

This majorization set enables us to bound any Schur-convex function of the original gradients by its value on the majorization set. Since the moments accountant function depends on the gradients through a Schur-convex structure, applying majorization allows us to obtain a tight holistic (multi-variate) privacy bound for $\ell_2$-clipped gradients.

In what follows, we first introduce background on majorization theory, then show that the moments accountant function given in Equation~\ref{lap_2account1} is Schur-convex. Finally, we derive a tight majorization set and use it to bound the total moment accountant function for Laplace mechanisms.

\begin{definition}[Weak Majorization]
Let \( \mathbf{x}, \mathbf{y} \in \mathbb{R}^n \). We say that \( \mathbf{x} \) \emph{weakly majorizes} \( \mathbf{y} \) from below, denoted \( \mathbf{x} \succ_w \mathbf{y} \), if
\[
\sum_{i=1}^{k} x_i^{\downarrow} \geq \sum_{i=1}^{k} y_i^{\downarrow} \quad \text{for all } k = 1, \dots, n,
\]
where \( x_i^{\downarrow} \) and \( y_i^{\downarrow} \) denote the components of \( \mathbf{x} \) and \( \mathbf{y} \), sorted in non-increasing order~\cite{marshall1979inequalities}.
\end{definition}

\begin{definition}[Strong Majorization and Schur-Convexity]
Let \( \mathbf{x}, \mathbf{y} \in \mathbb{R}^n \). We say that \( \mathbf{x} \) is \emph{majorized} by \( \mathbf{y} \), denoted \( \mathbf{x} \prec \mathbf{y} \), if
\[
\sum_{i=1}^{k} x_i^{\downarrow} \leq \sum_{i=1}^{k} y_i^{\downarrow} \quad \text{for all } k = 1, \dots, n, \quad \text{and} \quad \sum_{i=1}^{n} x_i = \sum_{i=1}^{n} y_i.
\]
A function \( F: \mathbb{R}^n \to \mathbb{R} \) is called \emph{Schur-convex} if
\[
\mathbf{x} \prec \mathbf{y} \quad \Rightarrow \quad F(\mathbf{x}) \leq F(\mathbf{y}).
\]
\end{definition}
Schur-convex functions favor vectors that are more spread out, making them useful for bounding symmetric functionals.


\subhead{Schur--Ostrowski Criterion \cite{peajcariaac1992convex}}
Let \( f: \mathbb{R}^d \to \mathbb{R} \) be a symmetric function with continuously differentiable partial derivatives. Then \( f \) is Schur-convex if and only if for all \( \mathbf{x} \in \mathbb{R}^d \) and for all \( 1 \leq i, j \leq d \), the following inequality holds: 
\[
(x_i - x_j)\left( \frac{\partial f}{\partial x_i} - \frac{\partial f}{\partial x_j} \right) \geq 0 
.  
\]
The following result is proven in Appendix~\ref{mafschur}.

\begin{figure*}[htbp]
    \centering
    \begin{subfigure}[b]{0.24\textwidth}
        \includegraphics[width=\linewidth]{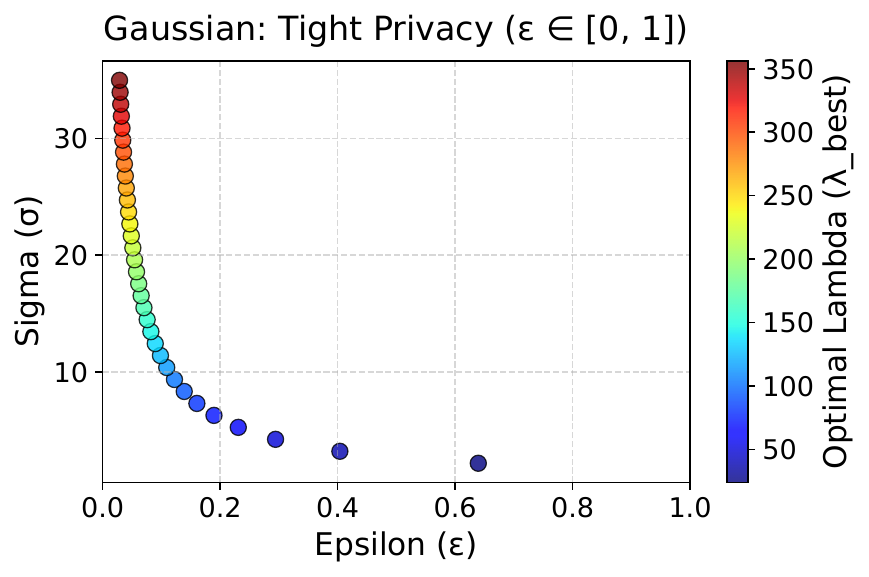}
        \caption{Gaussian Tight Privacy}
        \label{fig:fig3}
    \end{subfigure}
    \hfill
    \begin{subfigure}[b]{0.24\textwidth}
        \includegraphics[width=\linewidth]{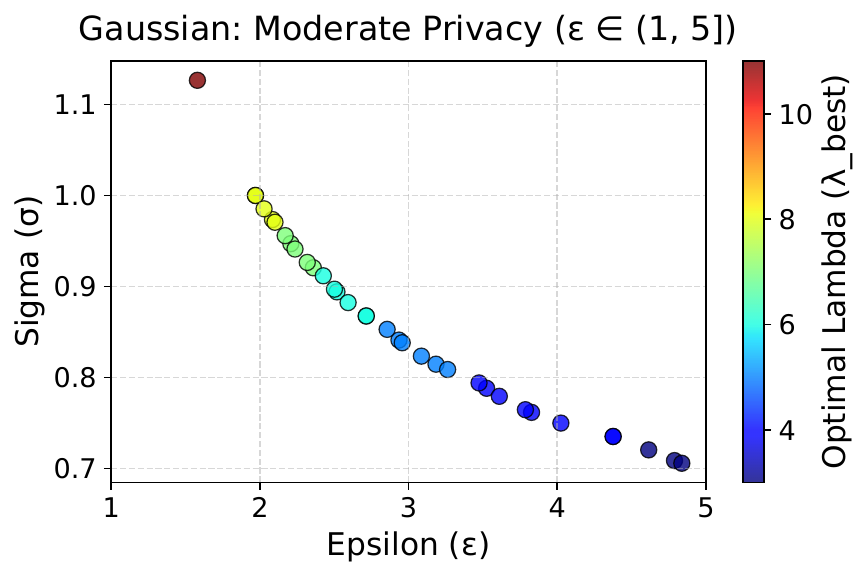}
        \captionsetup{font=small}
        \caption{Gaussian Moderate Privacy}
        \label{fig:fig4}
    \end{subfigure}
    \hfill
    \begin{subfigure}[b]{0.24\textwidth}
        \includegraphics[width=\linewidth]{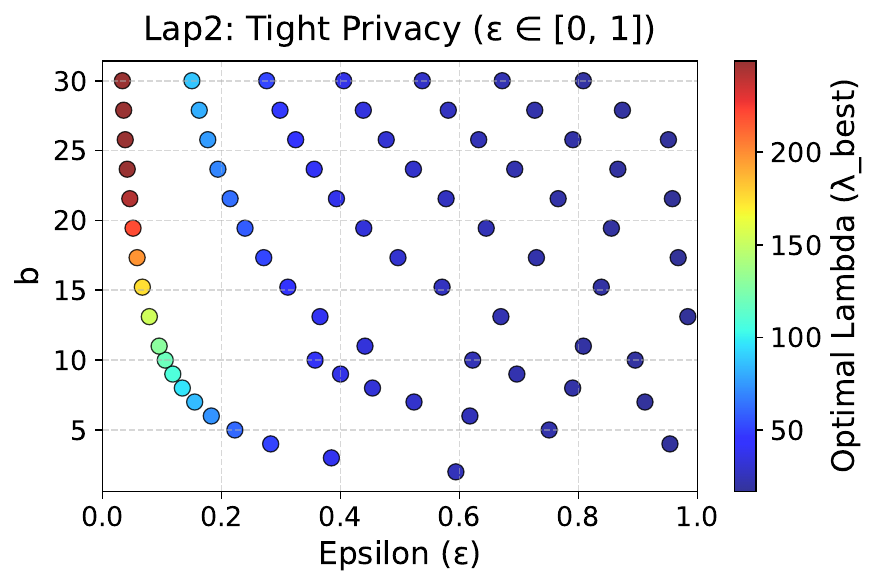}
        \caption{Lap2 Tight Privacy}
        \label{fig:fig1}
    \end{subfigure}
    \hfill
    \begin{subfigure}[b]{0.24\textwidth}
        \includegraphics[width=\linewidth]{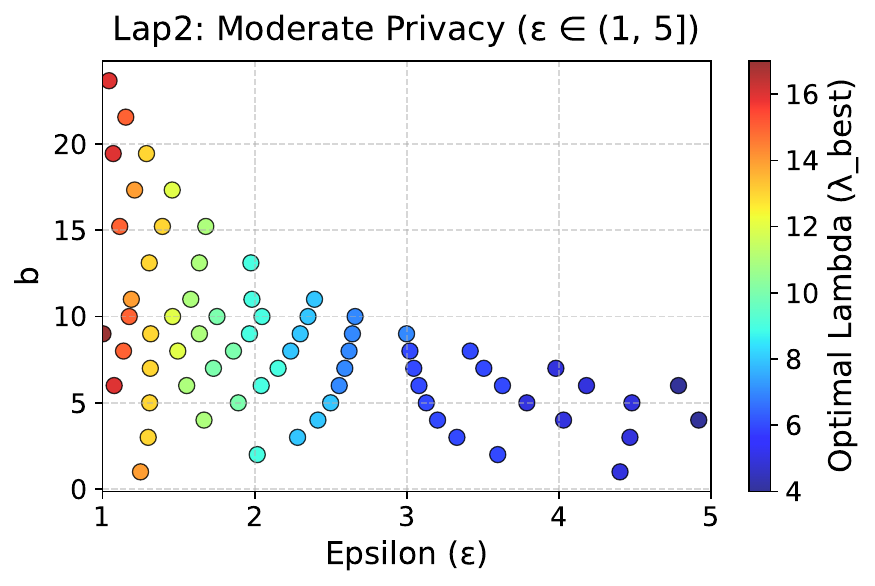}
        \caption{Lap2 Moderate Privacy}
        \label{fig:fig2}
    \end{subfigure}

\caption{In Gaussian DP-SGD, the so-called \textit{privacy wall} leads to sharp increases in the noise multiplier under tight privacy ($\epsilon < 0.5$). This arises because the injected noise scales with $\sigma \cdot C$, which becomes problematic in settings like pretraining where large $C$ is needed to preserve gradient signal—resulting in excessive noise. In contrast, \textsc{Lap}\textsubscript{2} supports larger noise or larger clipping norms (see Appendix~\ref{sec:evo_lap2}), due to its heavy tails and efficient high-order moments accounting.}
\vspace{-0.2in}\label{fig:all_four}
\end{figure*}

\begin{theorem}[Schur-Convexity of MAF]
\label{schurmaf}
Let $\bar{\mathbf{G}} = \{ \bar{g}_1, \dots, \bar{g}_n \}$ denote a single clipped gradient vector of dimension $n$, where each coordinate $\bar{g}_i$ is obtained by applying $\ell_2$ clipping to the original gradient followed by independent Laplace noise. Let $\alpha_{\bar{g}_i}(\lambda)$ denote the moments accountant function for coordinate $i$.

Then, the total moments accountant function
\begin{equation}
\alpha_{\bar{\mathbf{G}}}(\lambda) = \sum_{i=1}^n \alpha_{\bar{g}_i}(\lambda)
\end{equation}
is \textbf{Schur-convex} with respect to the vector of unsigned marginal clipped gradients, i.e., the coordinate-wise magnitudes $|\bar{g}_1|, \dots, |\bar{g}_n|$.
\end{theorem}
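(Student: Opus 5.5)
The total moments accountant $\alpha_{\bar{\mathbf{G}}}(\lambda)=\sum_{i=1}^n \phi(|\bar g_i|)$ is a symmetric separable sum. Here $\phi(c)$ is the univariate accountant of Theorem~\ref{MAflaplace} with the clipping threshold $C$ replaced by the coordinate magnitude $c=|\bar g_i|$:
\[
\phi(c)=\log\Bigl[\sum_{\eta=0}^{\lambda+1}\binom{\lambda+1}{\eta}(1-\zeta)^{\lambda+1-\eta}\zeta^\eta F(c,\eta)\Bigr].
\]
A symmetric sum $\sum_i\phi(x_i)$ of a differentiable convex $\phi$ is Schur-convex. This follows from the Schur--Ostrowski criterion: $\partial_i-\partial_j=\phi'(x_i)-\phi'(x_j)$ has the sign of $x_i-x_j$ when $\phi'$ is nondecreasing. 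The whole proof therefore reduces to showing that $\phi$ is convex on $c\ge 0$.

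\emph{Step 1 (each $F(\cdot,\eta)$ is convex and log-convex).} For $\eta\ge1$, write $F(c,\eta)=a_\eta e^{(\eta-1)c/b}+a'_\eta e^{-\eta c/b}$ with $a_\eta=\eta/(2\eta-1)>0$ and $a'_\eta=(\eta-1)/(2\eta-1)\ge0$. This is a nonnegative combination of exponentials of linear functions of $c$, hence log-convex in $c$. For $\eta=0$ the formula gives $F(c,0)=1$, a constant, which is trivially log-convex; I will check this directly from the definition rather than trusting the formula at $\eta=0$.

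\emph{Step 2 (the mixture is log-convex).} Log-convexity is preserved under nonnegative sums: $\log\sum e^{h_k}$ is convex when each $h_k$ is convex, since log-sum-exp is convex and nondecreasing in each argument. The weights $\binom{\lambda+1}{\eta}(1-\zeta)^{\lambda+1-\eta}\zeta^\eta$ are nonnegative, so $\phi=\log\sum_\eta w_\eta F(\cdot,\eta)$ is convex in $c$. It is also smooth, so the continuity-of-derivatives hypothesis of the criterion holds.

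\emph{Step 3 (conclude).} Apply Schur--Ostrowski to $f(x)=\sum_i\phi(x_i)$ on $\mathbb{R}_{\ge0}^n$. By symmetry of $f$, $(x_i-x_j)(\phi'(x_i)-\phi'(x_j))\ge0$ follows from monotonicity of $\phi'$, which is Step 2. Hence $\alpha_{\bar{\mathbf{G}}}$ is Schur-convex in $(|\bar g_1|,\dots,|\bar g_n|)$. I will also note that $\phi$ is nondecreasing on $c\ge0$, which also gives monotonicity under weak majorization; this is the form needed later when the comparison vector need not have the same sum.

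The main obstacle is a modelling subtlety rather than the calculus. Theorem~\ref{MAflaplace} is stated as a worst case over $|\mathbf g|\le C$, so I must argue that the coordinate-$i$ accountant really is $\phi(|\bar g_i|)$ evaluated at the actual magnitude. To do this I will redo the univariate computation with sensitivity $c$ in place of $C$: the privacy-loss integrals depend only on $c/b$, and the sign of $\bar g_i$ is irrelevant by the symmetry of the Laplace density. A second point to confirm is the coordinate-wise summation itself: the noise is independent across coordinates, so the joint density factorizes and the accountants add. Given these, the convexity argument above is routine.
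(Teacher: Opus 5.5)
Your proof is correct and follows the paper's route: the paper likewise uses Schur--Ostrowski to reduce Schur-convexity of the separable sum to convexity of the univariate log-mixture $\phi$, and gets that convexity from the pointwise inequality $FF''\ge (F')^2$ (i.e.\ log-convexity of each $F(\cdot,\eta)$) combined with Cauchy--Schwarz over the binomial weights, which is a hands-on proof of the log-sum-exp closure fact you cite. One caution on your side remark: additivity across coordinates is not needed, since the statement defines $\alpha_{\bar{\mathbf{G}}}$ as the sum, and your justification for it would fail under subsampling, because the output law on $d'$ is $(1-\zeta)\prod_i p^0_i+\zeta\prod_i p^1_i$, a mixture of product densities that does not factorize (the inclusion event is shared by all coordinates), so the per-coordinate accountants do not simply add.
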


With $\alpha(\lambda)$ being Schur-convex, we now introduce a majorization set over $\mathbf{G}$, the $\ell_2$ clipped (but not noisy) gradient vector. The following result is proven in Appendix~\ref{majorseet}.

\begin{lemma}[Majorization Set Construction]
\label{lem:majorset}
Let $\bar{\mathbf{G}} = \{ \bar{g}_1, \dots, \bar{g}_n \}$ denote a single clipped gradient vector of dimension $n$, where each coordinate $\bar{g}_i$ is obtained by applying $\ell_2$ clipping to the original gradient. Then, $\mathbf{G}$ is weakly majorized by the vector $x = \{x_1, \dots, x_n\}$ defined by
\[
x_i = C \left( \sqrt{i} - \sqrt{i-1} \right), \quad i = 1, \dots, n,
\]
where $C$ is the $\ell_2$ clipping threshold. That is,
\(
\mathbf{G} \prec_w x,
\).
\end{lemma}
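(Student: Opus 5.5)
The plan is to work with the vector of absolute values $|\bar g_i|$, since Theorem~\ref{schurmaf} concerns these magnitudes, and to prove the partial-sum inequalities in the definition of weak majorization directly. Write $y_1 \ge y_2 \ge \dots \ge y_n \ge 0$ for the magnitudes $|\bar g_i|$ sorted in non-increasing order. Because $\ell_2$ clipping leaves the norm unchanged under permutation and sign flips, we have $\sum_{i=1}^n y_i^2 \le C^2$. The candidate vector $x$ has components $x_i = C(\sqrt{i}-\sqrt{i-1})$. These are positive and strictly decreasing, because $\sqrt{i}-\sqrt{i-1} = 1/(\sqrt{i}+\sqrt{i-1})$. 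Hence $x$ is already sorted, $x^{\downarrow}=x$, and its partial sums telescope:
\[
\sum_{i=1}^k x_i = C\sqrt{k}, \qquad k=1,\dots,n.
\]

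The main step is therefore to show that $\sum_{i=1}^k y_i \le C\sqrt{k}$ for every $k$. I will get this from Cauchy--Schwarz applied to the top $k$ coordinates:
\[
\sum_{i=1}^k y_i \;\le\; \sqrt{k}\,\Bigl(\sum_{i=1}^k y_i^2\Bigr)^{1/2} \;\le\; \sqrt{k}\,\Bigl(\sum_{i=1}^n y_i^2\Bigr)^{1/2} \;\le\; C\sqrt{k}.
\]
This is the same inequality $\|x\|_1\le\sqrt{n}\|x\|_2$ that the paper uses in Section~\ref{sec:methodology}, restricted to a $k$-dimensional subvector. Together with the telescoping identity it gives $\sum_{i=1}^k y_i^{\downarrow} \le \sum_{i=1}^k x_i^{\downarrow}$ for all $k$, which is the weak majorization $|\mathbf{G}| \prec_w x$ claimed in the lemma. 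Equivalently, $x$ weakly majorizes $\mathbf{G}$ from below in the sense of the paper's definition.

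I also want to note tightness, because it explains why $x$ is the right choice of majorization set. For each $k$ separately, the bound $C\sqrt{k}$ is attained by the clipped vector with $k$ equal coordinates $C/\sqrt{k}$. So no vector with smaller partial sums can dominate every $\ell_2$-clipped gradient.

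The step I expect to be the real obstacle is not the inequality but the distinction between weak and strong majorization. Theorem~\ref{schurmaf} is stated for Schur-convexity, which concerns strong majorization $\prec$, and that requires equal total sums. Here $\sum_i y_i$ can be strictly smaller than $C\sqrt{n}$. For the lemma itself only the weak statement is needed, so the proof is complete as outlined. However, I would add a remark that using it later requires the per-coordinate MAF to be \emph{increasing} in $|\bar g_i|$ as well as Schur-convex. This holds because $F(C,\eta)$ in Theorem~\ref{MAflaplace} is increasing in the clipping magnitude for $\eta\ge1$, and by the standard result in~\cite{marshall1979inequalities}, increasing Schur-convex functions preserve $\prec_w$.
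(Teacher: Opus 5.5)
Your proof is correct and follows the paper's argument. The paper bounds the partial sums of the first $i$ magnitudes by $C\sqrt{i}$ using the AM--QM inequality, which is equivalent to your Cauchy--Schwarz step, and compares them with the telescoping sums $\sum_{j\le i} x_j = C\sqrt{i}$; your explicit sorting, your check that $x$ is already non-increasing (so $x^{\downarrow}=x$), and your remark that applying the lemma also needs the accountant to be increasing (which the paper shows in the proof of Theorem~\ref{schurmaf}) make precise points the paper leaves implicit.
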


We now extend the uni-variate moments accountant to the multivariate setting by summing across coordinates, where each coordinate $i$ uses its corresponding majorization bound $x_i$. The following theorem naturally follows from Theorems~\ref{MAflaplace}, \ref{schurmaf}, and Lemma~\ref{lem:majorset}.

\begin{theorem}[Subsampled Multi-variate Laplace Mechanism]
\label{MAflaplace-multi}
Let $\bar{\mathbf{G}} = \{ \mathbf{\bar{g}}_1, \dots, \mathbf{\bar{g}}_n \}$ be the set of differentially private gradients obtained by applying $C$-$\ell_2$ clipping followed by a multi-variate Laplace mechanism with scale $b$ and sampling probability $\zeta$. Define the majorization set $x = \{x_1, \dots, x_n\}$ by $x_i = C \left( \sqrt{i} - \sqrt{i-1} \right), \quad i = 1, \dots, n$. Then, the total moments accountant satisfies
\begin{equation}
\label{mafpara}
\alpha_{\bar{\mathbf{G}}}(\lambda) \leq \sum_{i=1}^n \log \left[
 \sum_{\eta=0}^{\lambda+1} \binom{\lambda+1}{\eta} (1-\zeta)^{\lambda+1-\eta} \zeta^\eta F(x_i, \eta)
\right],
\end{equation}
where
\begin{equation}
\label{eqn:G-multi}
F(x_i, \eta) = \frac{\eta e^{\frac{(\eta-1)x_i}{b}} + (\eta-1) e^{-\frac{\eta x_i}{b}}}{2\eta-1}.
\end{equation}
\end{theorem}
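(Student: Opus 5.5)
The plan is to combine the three earlier results directly. Write $\phi(t)=\log\bigl[\sum_{\eta=0}^{\lambda+1}\binom{\lambda+1}{\eta}(1-\zeta)^{\lambda+1-\eta}\zeta^\eta F(t,\eta)\bigr]$. This is the univariate moments accountant of Theorem~\ref{MAflaplace}, with the clipping threshold $C$ replaced by a coordinate magnitude $t\ge 0$.

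\textbf{Step 1 (per-coordinate bound and additivity).} The Laplace noise is independent across coordinates, so the privacy loss of the full vector splits as a sum of coordinate-wise losses. In the worst-case neighbouring pair, coordinate $i$ is a subsampled univariate Laplace mechanism whose sensitivity is $|g_i|$, the magnitude of the $i$-th coordinate of the $\ell_2$-clipped gradient $\mathbf G$. Applying Theorem~\ref{MAflaplace} with threshold $|g_i|$ gives $\alpha_{\bar g_i}(\lambda)\le\phi(|g_i|)$.

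One caveat must be checked here. Subsampling couples the coordinates, since a single Bernoulli draw decides whether $\mathbf x$ is in the batch for all coordinates at once. So I will argue that the paper's accounting sums coordinate accountants, as in Theorem~\ref{schurmaf}, and adopt that convention. This yields $\alpha_{\bar{\mathbf G}}(\lambda)\le\Phi(|\mathbf G|):=\sum_i\phi(|g_i|)$.

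\textbf{Step 2 (pass to the majorizing vector).} By Lemma~\ref{lem:majorset}, $|\mathbf G|\prec_w x$ with $x_i=C(\sqrt i-\sqrt{i-1})$. By Theorem~\ref{schurmaf}, $\Phi$ is Schur-convex, but Schur-convexity alone only controls \emph{strong} majorization. To use weak majorization I need the standard Marshall--Olkin fact: if $\Phi$ is Schur-convex \emph{and increasing in each argument}, then $u\prec_w v$ implies $\Phi(u)\le\Phi(v)$.

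Monotonicity of $\phi$ on $t\ge0$ follows by differentiating $F(t,\eta)$:
\[
\partial_t F=\frac{\eta(\eta-1)}{(2\eta-1)b}\bigl(e^{(\eta-1)t/b}-e^{-\eta t/b}\bigr)\ge0 .
\]
This holds for $\eta\ge1$; the terms with $\eta=0$ and $\eta=1$ are constant in $t$. Hence each summand of the mixture is nondecreasing in $t$, and so is its logarithm.

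An alternative route to the same conclusion: from $u\prec_w v$ one can construct $w\le v$ coordinatewise with $u\prec w$. Then Schur-convexity gives $\Phi(u)\le\Phi(w)$, and monotonicity gives $\Phi(w)\le\Phi(v)$. Either way, $\Phi(|\mathbf G|)\le\Phi(x)$, which is exactly the right-hand side of \eqref{mafpara}.

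\textbf{Main obstacle.} The delicate step is Step 2's use of weak rather than strong majorization. Because $\sum_i|g_i|$ generally differs from $\sum_i x_i=C\sqrt n$, the result only works with the monotonicity argument above, so I would write that part out carefully. A secondary issue is justifying the additive decomposition across coordinates under shared subsampling in Step 1. There I would defer to the convention already fixed by Theorem~\ref{schurmaf} rather than re-derive a joint bound.
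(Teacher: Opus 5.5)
Your proposal is correct and follows the paper's route: the paper only states that the result follows from Theorem~\ref{MAflaplace}, Theorem~\ref{schurmaf} and Lemma~\ref{lem:majorset}, and your increasing-plus-Schur-convex step is the bridge from weak majorization that the paper leaves implicit (monotonicity is shown inside the proof of Theorem~\ref{schurmaf}). Deferring the coupling issue also matches the paper, which defines the total accountant as the coordinate-wise sum in Theorem~\ref{schurmaf}, but this sum is a definition, not a bound you could derive for the real mechanism: with one shared inclusion event, the joint moment is, to leading order in $\zeta$, driven by $\prod_i F(|g_i|,2)-1$, which strictly exceeds $\sum_i\bigl(F(|g_i|,2)-1\bigr)$ once two or more coordinates are nonzero.
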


\section{Analysis and Framework}
\label{sec:analysis}
In this section, we provide a theoretical analysis of \textsc{Lap}\textsubscript{2}, including per-round privacy accounting, utility measured by the signal-to-noise ratio, and insights into parameter selection.

\begin{figure*}[t]
    \centering
    \includegraphics[width=0.8\linewidth, trim=20 0 30 0, clip]{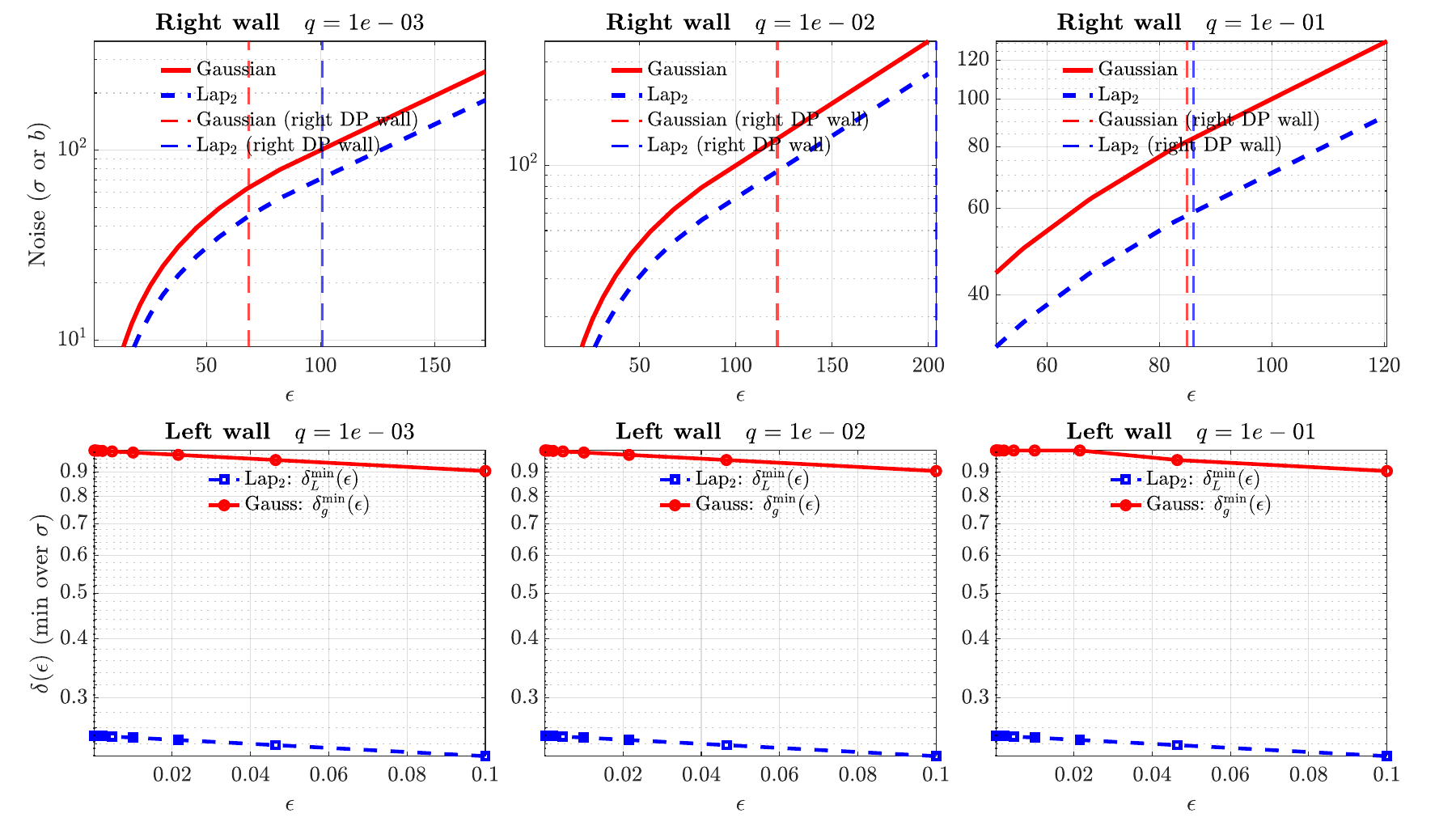}
    \vspace{-0.1in}\caption{\textbf{Illustration of two-sided privacy walls} (left and right walls) during CNN training on the MNIST dataset under DP-SGD. 
    The \emph{left wall} corresponds to the high-privacy regime, where the noise scale saturates and $\delta(\epsilon)$ approaches 1, indicating limited further privacy gain.
    The \emph{right wall} corresponds to the low-privacy regime, where the effective signal-to-noise ratio no longer improves with larger $\epsilon$.
    Together, these walls define the practical operating range (privacy corridor) in which training remains both private and useful.}
    \vspace{-0.15in}
    \label{Fig:Privacywalls}
\end{figure*}

   


\subsection{Privacy Accounting: \textsc{Lap}\textsubscript{2} vs Gaussian}

Analyzing Equations~\ref{A_alpha} and~\ref{mafpara}, the moments accountant functions (MAFs) for the subsampled \textsc{Lap}\textsubscript{2} and Gaussian mechanisms, yields two critical observations:

\textbf{(1) Lap\textsubscript{2} mitigates the privacy wall.}  
Differentially private training is fundamentally constrained by two characteristic limits, termed \emph{two-sided privacy walls}.  
The \textbf{left wall}, which primarily affects the Gaussian mechanism, emerges in the high-privacy regime ($\epsilon \!\to\! 0$) where $\delta(\epsilon)$ approaches~1. In this regime, increasing the noise scale yields little to no additional privacy gain, indicating the onset of \emph{privacy saturation}.  
While both Gaussian and Laplace mechanisms exhibit this saturation behavior, the Gaussian variant suffers more severely due to the rapid rise in its DP failure term $\delta$, which can render its guarantees vacuous. The \textbf{right wall} arises in the low-privacy regime ($\epsilon \!\gg\! 1$), where the effective signal-to-noise ratio (SNR) no longer improves with larger~$\epsilon$, marking the onset of \emph{utility saturation}.  
Between these limits lies a narrow \emph{privacy corridor} in which both privacy and accuracy remain meaningful.

Unlike the Gaussian mechanism, \textsc{Lap}\textsubscript{2} demonstrates strong resistance to the left privacy wall, as seen in Figure~\ref{fig:fig3} and \ref{fig:fig4} and previously noted in~\cite{mironov2019r,TAN}. This stability arises from the heavier tails of Laplace noise (Equation~\ref{eqn:G-multi}), which allow more efficient privacy budget consumption in both high- and low-privacy regimes. Figure~\ref{fig:fig2} and \ref{fig:fig2} further shows this advantage over different clipping values and their optimal moment orders~$\lambda^\star$, where smaller clipping values (dominant in high-privacy settings) correspond to larger exploitable moments.\footnote{\textsc{Lap}\textsubscript{2} results are based on a 26K-parameter CNN with $\delta = 10^{-5}$.}

Figure~\ref{Fig:Privacywalls} summarizes these phenomena for CNN training on MNIST under Gaussian and \textsc{Lap}\textsubscript{2} DP-SGD at subsampling rates $q\!\in\!\{10^{-3},10^{-2},10^{-1}\}$. In the \textbf{right-wall regime} (top row), the slope $W_R(\epsilon)=|d\log\sigma/d\log\epsilon|$ deviates from the ideal $1/\epsilon$ scaling, signaling diminishing utility returns. In the \textbf{left-wall regime} (bottom row), $\delta_g(\epsilon)$ increases faster than $\delta_{L2}(\epsilon)$, and the first $\epsilon$ satisfying $\delta_g > 2\delta_{L2}$ marks the onset of privacy saturation.  
As $q$ increases, both walls move inward, narrowing the privacy corridor.  
Overall, \textsc{Lap}\textsubscript{2} consistently delays both boundaries relative to Gaussian DP-SGD, expanding the stable region of privacy-utility trade-offs.

A practical limitation remains for large-scale models: summing moment-aggregated functions (MAFs) across millions of coordinates, as required by majorization (Equation~\ref{mafpara}), inflates the overall privacy bound.  
Hence, \textsc{Lap}\textsubscript{2} is most effective for small- to medium-scale architectures such as MobileNet and compact ResNet variants (e.g., ResNet-18), where the two walls remain well-separated and its benefits are most pronounced.



\textbf{(2) Clipping-aware accounting.}
Traditional DP-SGD decouples clipping and noise addition—$\sigma$ governs the privacy guarantee, while $C$ only scales the added noise. In contrast, our \textsc{Lap}\textsubscript{2}-based accounting (Theorem~\ref{subsampledLap2}) integrates $C$ directly into the moments accountant, resulting in tighter privacy bounds and more favorable inflection points.

Figure~\ref{fig:acc_eps_clip_b} empirically supports this claim: with accountant-adjusted $b$, larger $C$ values yield over 90\% accuracy at moderate privacy budgets (e.g., $\epsilon = 5$). Under tighter budgets (e.g., $\epsilon = 1$), smaller $C$ is required, leading to reduced utility. Furthermore, we performed a grid search over the configurations $(q,C,b)$ to identify the optimal setting for $\epsilon=0.5, 1, 2$, respectively, and $\delta=10^{-5}$. The results of this search are presented in Figure~\ref{fig:3dplots}, which shows the accuracy of the CNN trained on the MNIST dataset. The optimal configuartion (from optimizer) stays within the optimal - yellow region and is highlighted using star mark in the figure. 

\begin{figure}[!h]
\centering
\includegraphics[width=0.85\linewidth]{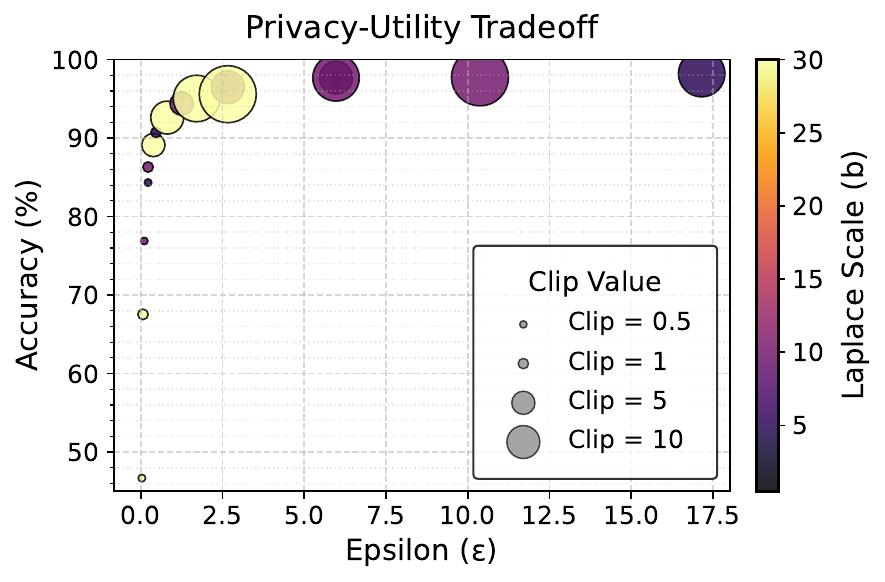}\vspace{-0.1in}
\caption{
Accuracy vs. privacy budget $\epsilon$ for different clipping norms $C$ using \textsc{Lap}\textsubscript{2} (CNN on MNIST, $\delta=10^{-5}$). 
Increasing $C$ improves utility under moderate $\epsilon$ due to stronger signal retention, but high $C$ becomes suboptimal for very tight $\epsilon$ as noise grows superlinearly.
\vspace{-0.2in}}
\label{fig:acc_eps_clip_b}
\end{figure}

\begin{figure*}[t]
  \centering
  \begin{subfigure}[b]{0.32\textwidth}
    \centering
    \includegraphics[width=\linewidth]{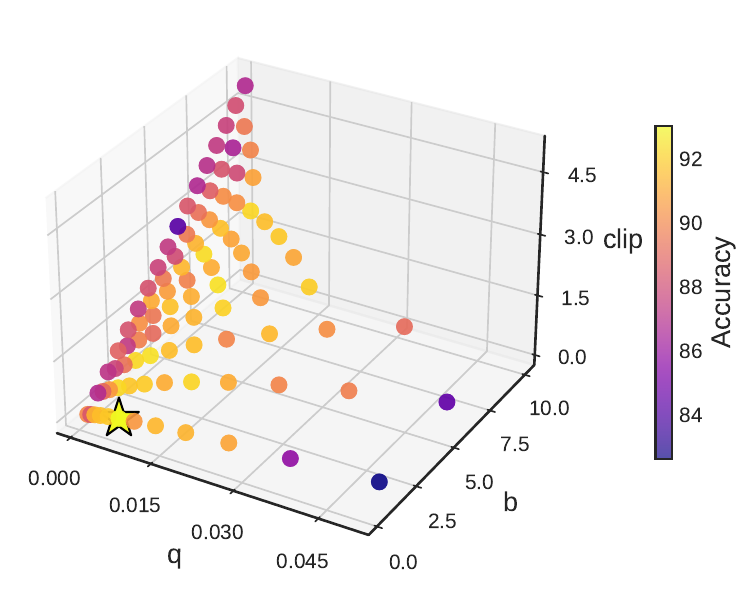}
    \caption{$\epsilon = 0.5$}
    \label{fig:3d_1}
  \end{subfigure}
  \hfill
  \begin{subfigure}[b]{0.32\textwidth}
    \centering
    \includegraphics[width=\linewidth]{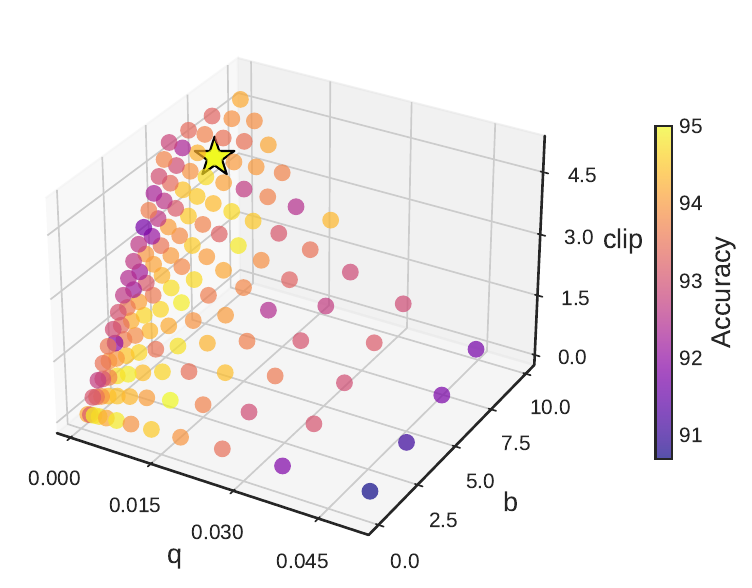}
    \caption{$\epsilon = 1$}
    \label{fig:3d_2}
  \end{subfigure}
  \hfill
  \begin{subfigure}[b]{0.32\textwidth}
    \centering
    \includegraphics[width=\linewidth]{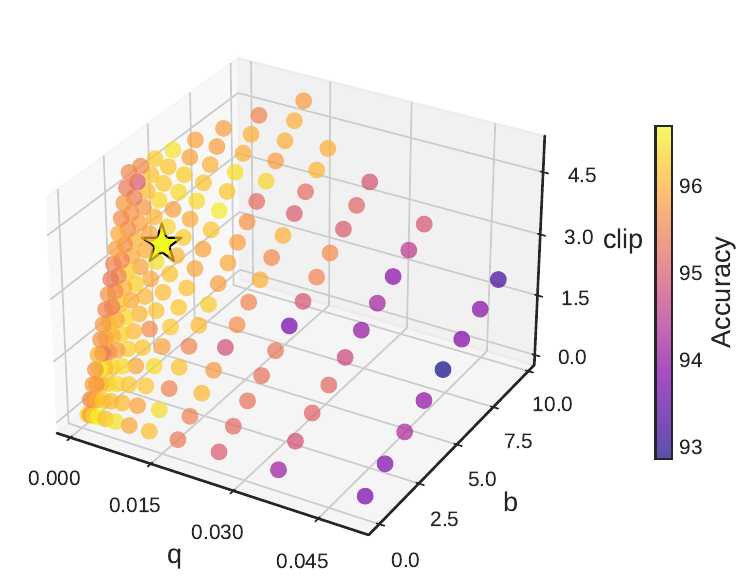}
    \caption{$\epsilon = 2$}
    \label{fig:3d_3}
  \end{subfigure}

  \caption{
  Accuracy of a CNN trained on MNIST under DP-SGD with \textsc{Lap}\textsubscript{2} for $20$ epochs.
  Configurations $(q, b, c)$ were selected via grid search for each target privacy budget $\epsilon \in \{0.5, 1, 2\}$ and $\delta=10^{-5}$,
  with color indicating the resulting accuracy.
  The star marks the configuration achieving maximum accuracy for each $\epsilon$.
  } \vspace{-0.2in}
  \label{fig:3dplots}
\end{figure*}

\vspace{-0.1in}
\subsection{Optimal Parameter Selection}
\label{sec:opt}

Let each individual gradient be clipped to norm \( C \) with the sampling rate \( \zeta = B/N \), batch size \( B \) and dataset size \( N \). In a single step, the expected signal contribution per example is \( C / B \) with probability \( \zeta \), and zero otherwise. Over \( T \) steps, the accumulated signal per sample is thus \( \zeta T C / B = T C / N \), and its squared norm scales as \( T^2 C^2 / N^2 \). 

In Gaussian DP-SGD, the total noise added per step is \( \sigma C \), so the noise variance over \( T \) steps is \( T C^2 \sigma^2 / B^2 \). This motivates the signal-to-noise ratio (SNR)
\[
\eta^2 = \frac{T^2 C^2 / N^2}{T C^2 \sigma^2 / B^2}
= \frac{T C^2}{N^2} \cdot \frac{B^2}{C^2 \sigma^2}
= \frac{T B^2}{N^2 \sigma^2}
= \frac{T \zeta^2}{\sigma^2}.
\]
This is the key driver of utility–privacy tradeoff: as shown in~\cite{TAN}, the RDP privacy loss \( \epsilon \) under Gaussian DP-SGD can be approximated as
\[
\epsilon_{RDP} \approx \eta^2 + 2\eta \sqrt{\log(1/\delta)}.
\]

We refer to \( \eta = \zeta / \sigma \) as the \emph{per-step SNR}, and \( \eta^2 = T \zeta^2 / \sigma^2 \) as the \emph{total SNR} across \( T \) steps.

\vspace{0.5em}
\noindent\textbf{Extension to \textsc{Lap}\textsubscript{2}.} Unlike the Gaussian mechanism, where \( \sigma \) and \( C \) appear as a multiplicative pair in the total noise, the multivariate Laplace mechanism used in \textsc{Lap}\textsubscript{2} introduces noise drawn from a symmetric Laplace distribution with dispersion \( b \), while gradient sensitivity is controlled by clipping to \( C \). In this setting, privacy loss depends on the ratio $\rho = \frac{C}{b}$, which we refer to as the \emph{privacy-relevant sensitivity-to-noise ratio}. Figure \ref{fig:acc_eps_clip_b} confirms that increasing $\rho = C/b$ (subject to DP constraints) yields consistent accuracy gains. $\rho$ also governs the growth of moment terms in the privacy accountant. Specifically, in the small-\( \zeta \), small-\( \rho \) regime, the moments accountant satisfies
\[
\alpha(\lambda) \approx T \zeta^2 \kappa_\lambda \rho^2, \quad \text{where} \quad \kappa_\lambda = \frac{\lambda(\lambda+1)}{2(2\lambda+1)}.
\]

Plugging into the MAF\(\to\)DP conversion,
\[
\epsilon(\delta)
\approx \min_{\lambda > 0} \left(
\frac{T \zeta^2 \kappa_\lambda \rho^2}{\lambda}
- \frac{\log \delta}{\lambda}
\right),
\]
which yields an optimal \( \rho^* \) under privacy target \( \epsilon_{\text{tar}} \) and failure probability \( \delta \):
\[
\rho^* \approx \frac{\epsilon_{\text{tar}}}{2 \zeta} \sqrt{\frac{1}{T \log(1/\delta)}}.
\]

From this, we obtain a first-order estimate of the optimal noise parameter \( b \) in terms of a chosen clipping norm \( C \):
\[
b^* \approx \frac{2 \zeta}{\epsilon_{\text{tar}}} \sqrt{T \log(1/\delta)} \; C.
\]

This formula gives a closed-form initialization for \((C, b)\), which can then be refined using exact privacy computation (e.g., Algorithm~\ref{Hfunction}). In practice, we recommend choosing \( C \) based on utility constraints (e.g., a percentile of per-example gradient norms), and computing \( b \) accordingly using the expression above. 
\subsection{The Lap\textsubscript{2} Framework}
\label{sec:framework}

The Lap\textsubscript{2} framework enables practical $(\epsilon, \delta)$-DP training under \textit{$\ell_2$ clipping} with \textit{Laplace noise}. To balance privacy and utility, Lap$_2$ optimizes the clipping threshold $C$ under a fixed privacy budget $(\epsilon, \delta)$. Unlike in Gaussian DP-SGD where clipping is decoupled from noise, our framework leverages a signal-to-noise ratio (SNR) interpretation where the signal grows with $C$ and the noise is implicitly governed by $b$, the Laplace scale. Specifically, under our $\ell_2$-clipped Laplace mechanism, the per-step SNR is approximated by:
$
\eta = \frac{C}{b},
$
and total SNR across $T$ steps is $\eta^2 T$. A higher SNR implies better utility; hence, we aim to maximize $\eta$—equivalently, maximize $C$ for a fixed $b$, subject to satisfying the privacy constraint $\epsilon(C, b) \leq \epsilon_{\text{target}}$. This motivates the following optimization problem:
\[
\max_{C,\ b} \ \frac{C}{b} \quad \text{s.t.} \quad \epsilon(C, b) \leq \epsilon_{\text{target}}, \quad C, b > 0.
\]
Due to the non-linear nature of the privacy computation (based on majorization and moments accounting), we perform a grid or binary search to efficiently explore feasible $(C, b)$ pairs. While the default implementation in Algorithm~\ref{Hfunction} uses grid search over $(C, b)$ ranges, a potential enhancement is to replace the inner loop with a binary search over $b$ for each fixed $C$. This accelerates convergence towards the optimal pair $(C^*, b^*)$ satisfying the privacy constraint.

In practice, we increment $C$ in ascending order (e.g., logarithmic or linear grid), and for each candidate $C$, we search for the smallest $b$ such that the moments accountant returns $\epsilon(C, b) \leq \epsilon_{\text{target}}$. The largest $C$ satisfying this constraint yields the highest possible SNR. This process is formalized in Algorithm~\ref{Hfunction}, and the resulting $(C^*, b^*)$ pair is passed to the DP-SGD loop in Algorithm~\ref{alg:lap2} in Appendix.

\begin{algorithm}[t]
\caption{Lap$_2$ Parameter  \( b \) Optimizer (via Binary Search)}
\label{Hfunction}
\begin{footnotesize}
\begin{tabbing}
\hspace{1em} \= \hspace{1em} \= \hspace{1em} \= \kill
\textbf{Input:} rounds $T$, sampling rate $\zeta$, privacy target $(\epsilon, \delta)$,\\
\> max values $C_{\text{max}}, b_{\text{max}}$; resolution $\Delta_C$, binary tolerance $\tau$ \\
\textbf{Output:} optimal $(C^*, b^*)$ maximizing $C$ under $(\epsilon, \delta)$ \\

$C^* \leftarrow 0$, $b^* \leftarrow \texttt{None}$ \\

\textbf{for} $C$ in $[C_{\text{min}}, C_{\text{max}}]$ step $\Delta_C$ \textbf{do} \\
\> $b_{\text{low}} \leftarrow b_{\text{min}},\ b_{\text{high}} \leftarrow b_{\text{max}}$ \\
\> \textbf{while} $b_{\text{high}} - b_{\text{low}} > \tau$ \textbf{do} \\
\> \> $b \leftarrow (b_{\text{low}} + b_{\text{high}})/2$ \\
\> \> Define majorization set: $x_i = C \cdot (\sqrt{i} - \sqrt{i-1})$ for $i = 1$ to $n$ \\
\> \> Initialize total accountant $\alpha \leftarrow 0$ \\
\> \> \textbf{for} $t = 1$ to $T$ \textbf{do} \\
\> \> \> Compute $F(x_i, \eta)$ and moments $\alpha_t$ (see Equation~\ref{mafpara}) \\
\> \> \> $\alpha \leftarrow \alpha + \alpha_t$ \\
\> \> \textbf{end for} \\
\> \> Compute $\epsilon_C(b)$ from moment accountant \\
\> \> \textbf{if} $\epsilon_C(b) \leq \epsilon$ \textbf{then} \\
\> \> \> $b_{\text{high}} \leftarrow b$ \\
\> \> \textbf{else} \\
\> \> \> $b_{\text{low}} \leftarrow b$ \\
\> \textbf{end while} \\
\> \textbf{if} $b_{\text{high}} < b_{\text{max}}$ \textbf{and} $C > C^*$ \textbf{then} \\
\> \> $(C^*, b^*) \leftarrow (C, b_{\text{high}})$ \\
\textbf{end for} \\

\textbf{return} $(C^*, b^*)$
\end{tabbing}
\end{footnotesize}
\end{algorithm}

\section{Experiments}
\label{sec:exp}

In this section, we evaluate the performance of our \textsc{Lap}\textsubscript{2} mechanism in terms of privacy, accuracy, and efficiency on computer vision (CV) and natural language processing (NLP) tasks. First, we compare the utility performance with two baselines: the standard DP-SGD using Gaussian noise, and the classical Laplace mechanism with $\ell_1$-norm clipping. 

\begin{table*}[]
\centering
\scriptsize
\caption{Hyperparameter settings for different CV datasets and models.}
\renewcommand{\arraystretch}{1.2} 
\setlength{\tabcolsep}{6pt} 
\begin{tabular}{|c|c|c|c|c|c|c|c|c|c|}
\hline
\textbf{Dataset} & \textbf{Model} & \textbf{Sampling} & \textbf{Clipping} & \textbf{$\delta$} & \textbf{Learning}  & \textbf{Label} & \textbf{Weight} & \textbf{Training} \\  
 &  & \textbf{Rate}  & \textbf{Threshold} &  & \textbf{Rate}  & \textbf{Smoothing} & \textbf{Decay} & \textbf{Steps} \\ \hline
MNIST & CNN & 0.0043 & 1 & $1\times10^{-5}$ & $1\times10^{-3}$  & 0.15 & $1\times10^{-4}$ & 5860 \\ \hline
Fashion-MNIST & CNN & 0.0043 & 1 & $1\times10^{-5}$ & $1\times10^{-3}$  & 0.15 & $1\times10^{-4}$ & 5860 \\ \hline
CIFAR-10 & ViT & 0.01668 & 1 & $1\times10^{-5}$ & $1\times10^{-3}$ & 0 & $1\times10^{-4}$ & 900 \\ \hline
\end{tabular}
\label{tab:hyper_cv}
\end{table*}

\subsection{Experimental Settings}

\noindent \textbf{Computer Vision Datasets and Tasks}. Image classification experiments were performed using three standard datasets: CIFAR-10, MNIST, and Fashion-MNIST. The CIFAR-10 dataset contains 60,000 32$\times$32 color images across 10 classes (6{,}000 per class), divided into 50,000 training and 10{,}000 test images. The MNIST dataset consists of 70,000 28$\times$28 grayscale images of handwritten digits ($0-9$), with 60{,}000 for training and 10{,}000 for testing. Similarly, the Fashion-MNIST dataset includes 70{,}000 28$\times$28 grayscale images of 10 clothing categories. The full dataset was used for evaluation.


\vspace{0.05in}

\noindent
\textbf{CV Models}.
Testing with MNIST and Fashion MNIST was done on a small 4-layer CNN model as described in the \href{https://github.com/tensorflow/privacy}{Tensorflow privacy tutorial} for CV tasks. 
    \noindent
Testing with CIFAR-10 was done with a ViT \cite{dosovitskiy2021imageworth16x16words} model with a patch size of 16, as sourced from the \href{https://timm.fast.ai/}{TIMM Repository}. The images were scaled to the expected input size of 224x224 pixels for the ViT model. No other augmentations were applied to the images.

\vspace{0.05in}

\noindent 
\textbf{Natural Language Processing Datasets and Tasks}.
We first evaluate the \textsc{Lap}\textsubscript{2} on the sentiment analysis tasks from the GLUE benchmark \cite{wang2018glue}. Specifically, following \cite{yu2021differentially}, we fully fine-tune the RoBERTa-base model on SST-2 and QNLI datasets \cite{wang2018glue}. 
These datasets are widely used to evaluate private training. 
SST-2 has more than 60k samples in the training set and 1,821 samples in the test set; QNLI has more than 100k samples in the training set and 5,463 samples in test, including two classes. 
We also evaluate the table-to-text generation task's performance that generates the table entries' descriptions. We fine-tune the DistilGPT2 model \cite{yu2021differentially} with the E2E dataset \cite{novikova2017e2e}.


\vspace{0.05in}

\noindent \textbf{Experimental Platform}. Experiments were conducted on three systems optimized for specific tasks: (1) a high-end workstation with an AMD Ryzen Threadripper PRO 5975WX (32 cores, 64 threads), 500\,GB RAM, and 3$\times$~NVIDIA Quadro RTX A6000 (48\,GB) GPUs for NLP tasks; (2) a virtualized server with up to 192 CPU cores (Intel Xeon Platinum and AMD EPYC), 1\,TB RAM, used primarily for ViT training; and (3) a consumer-grade system with an AMD Ryzen 7 8700F (8 cores, 16 threads), 32\,GB RAM, and an NVIDIA RTX 3060 (12\,GB VRAM) for CNN and ViT experiments on MNIST, Fashion-MNIST, and CIFAR-10.

\subsection{Utility Evaluation}

\begin{table}[h!]
\centering
\caption{Comparison of Gaussian, Laplace ($\ell_1$ norm), and \textsc{Lap}\textsubscript{2} ($\ell_2$ norm) mechanisms across different $\epsilon$.}
\begin{tabular}{|c|c|c|c|}
\hline
\textbf{$\epsilon$} & \textbf{Gaussian (\%)} & \textbf{Laplace (\%)} & \textbf{\textsc{Lap}\textsubscript{2} (\%)} \\
\hline
\multicolumn{4}{|c|}{\textbf{CNN on MNIST}} \\
\hline
3.42 & 97.27 & 51.82 & 96.82 \\
2.53 & 97.19 & 46.67 & 96.90 \\
1.68 & 97.42 & 28.82 & 95.39 \\
0.88 & 96.08 & 16.44 & 93.29 \\
0.13 & 87.44 & 10.40 & 78.96 \\
\hline
\multicolumn{4}{|c|}{\textbf{CNN on FMNIST}} \\
\hline
3.42 & 83.67 & 55.05 & 82.60 \\
2.53 & 83.09 & 43.85 & 81.45 \\
1.68 & 82.34 & 43.94 & 81.45 \\
0.88 & 80.34 & 28.22 & 77.03 \\
0.13 & 72.58 & 14.74 & 70.03 \\
\hline
\multicolumn{4}{|c|}{\textbf{ViT on CIFAR-10 (Fine-tuning)}} \\
\hline

0.75 & 97.17 & 53.58 & 98.11 \\
0.5  & 96.90 & 47.04 & 98.18 \\

\hline
\end{tabular}
\label{tab:privacy_comparison_all}
\end{table}

\noindent \textbf{Utility of CV Tasks}. To demonstrate the efficacy of the proposed \textsc{Lap}\textsubscript{2} framework, we first discuss results of classification on the CV datasets (MNIST and Fashion-MNIST datasets) using a small 4-layer CNN, shown in Table \ref{tab:privacy_comparison_all}. The privacy budget is in the range $0.1$ to $3.5$. Other training details is shown in the Table \ref{tab:hyper_cv}. Specifically, we compare the test accuracy achieved using the Gaussian mechanism, the Laplace mechanism with the $\ell_1$ norm, and the \textsc{Lap}\textsubscript{2} mechanism. The first key trend observed is that test accuracy increases with larger values of $\epsilon$, reflecting the fundamental trade-off between privacy and utility: a higher $\epsilon$ relaxes the privacy constraint, allowing less noise to be added and thereby improving model performance. The Gaussian mechanism performs well across the board, achieving over 96\% accuracy on MNIST when $\epsilon \geq 0.88$, and over 80\% on FMNIST. However, the Laplace mechanism with the $\ell_1$ norm consistently yields poor results. For instance, on MNIST with $\epsilon = 0.88$, it only achieves $16.44\% $ accuracy far below the Gaussian ($96.08\%$) and \textsc{Lap}\textsubscript{2} (93.29\%). On FMNIST, the performance gap is similarly large. This degradation is primarily due to the substantially larger $\ell_1$ norm of the gradient vector compared to its $\ell_2$ counterpart, which results in gradient loss during clipping during training. In contrast, the \textsc{Lap}\textsubscript{2} mechanism consistently matches or closely follows the performance of the Gaussian mechanism across all $\epsilon$ values. On MNIST, its accuracy remains above 93\% for $\epsilon \geq 0.88$, and still reaches 78.96\% at the extremely strict privacy level of $\epsilon = 0.13$, a $68.6$ percentage point gain over Laplace with the $\ell_1$ norm.

Furthermore, we fine-tune the ViT model on the CIFAR-10 dataset instead of training from scratch. The last section in Table \ref{tab:privacy_comparison_all} presents the fine-tuning results of the ViT model on the CIFAR-10 dataset under different noise mechanisms. 
For the privacy bound $\epsilon = 0.75$, the Gaussian and Laplace ($\ell_1$-clipping) mechanisms achieve accuracies of 97.17\% and 53.58\%, respectively, 
while \textsc{Lap}\textsubscript{2} attains a superior accuracy of 98.11\%. 
Similarly, for $\epsilon = 0.5$, the Gaussian and Laplace ($\ell_1$) methods yield 96.90\% and 47.04\%, whereas \textsc{Lap}\textsubscript{2} achieves 98.18\%. 
These results demonstrate that the proposed \textsc{Lap}\textsubscript{2} mechanism consistently outperforms both Gaussian and Laplace ($\ell_1$) approaches across different privacy budgets, 
highlighting its robustness and effectiveness for fine-tuning ViT models under differential privacy guarantees.


\color{black}

\begin{table}[h!]
\centering
\scriptsize
\caption{Comparison of Gaussian, Laplace ($\ell_1$ norm), and \textsc{Lap}\textsubscript{2} ($\ell_2$ norm) mechanisms across different privacy levels ($\epsilon$) for \texttt{RoBERTa-base} on SST-2 and QNLI.}
\begin{tabular}{|c|c|c|c|}
\hline
\textbf{$\epsilon$} & \textbf{Gaussian (\%)} & \textbf{Laplace (\%)} & \textbf{\textsc{Lap}\textsubscript{2} (\%)} \\
\hline
\multicolumn{4}{|c|}{\textbf{RoBERTa-base on SST-2}} \\
\hline
3.7382 & 90.31 & 50.34 & 90.11 \\
1.1584 & 89.65 & 50.21 & 83.32 \\
0.9108 & 89.23 & 50.15 & 89.14 \\
0.5384 & 87.16 & 48.97 & 87.88 \\
\hline
\multicolumn{4}{|c|}{\textbf{RoBERTa-base on QNLI}} \\
\hline
3.6452 & 83.26 & 50.76 & 83.73 \\
1.1365 & 82.61 & 50.49 & 82.52 \\
0.9345 & 82.17 & 50.18 & 82.87 \\
0.5168 & 80.41 & 49.97 & 60.87 \\
\hline
\end{tabular}
\label{tab:privacy_comparison_roberta_nlp}
\end{table}


\noindent \textbf{Utility of Sentiment Analysis (NLP)}. 
We conducted fine-tuning experiments for sentiment analysis using the SST-2 and QNLI datasets with the \texttt{RoBERTa-base} model. Table~\ref{tab:privacy_comparison_roberta_nlp} presents the model accuracy under various privacy levels ($\epsilon \in {0.5, 0.9, 1.1, 3.7}$), comparing our proposed \textsc{Lap}\textsubscript{2} mechanism with the Gaussian mechanism and the standard Laplace mechanism using the $\ell_1$ norm clipping.
As observed, the Laplace mechanism with the $\ell_1$ norm consistently yields accuracy around 50\% across all $\epsilon$ values on both datasets. This is close to the baseline performance of the pretrained model without fine-tuning, suggesting that the injected noise is so severe that the model fails to learn from the downstream task, rendering fine-tuning ineffective.

In contrast, our proposed \textsc{Lap}\textsubscript{2} mechanism consistently matches or outperforms the Gaussian mechanism across all $\epsilon$ values and both datasets (similar to our findings in the computer vision tasks). For the SST-2 dataset, \textsc{Lap}\textsubscript{2} achieves similar result as Gaussian when $\epsilon$ is large ($\epsilon = 3.7$). For the QNLI dataset, the trend is even more pronounced: the gap grows from roughly 0.5\% at $\epsilon = 3.6$. These results clearly demonstrate that \textsc{Lap}\textsubscript{2} maintains strong utility under strong privacy guarantees with Laplace noise, especially in the context of large pre-trained language models. 



\begin{figure}
    \centering
    \includegraphics[width=0.8\linewidth]{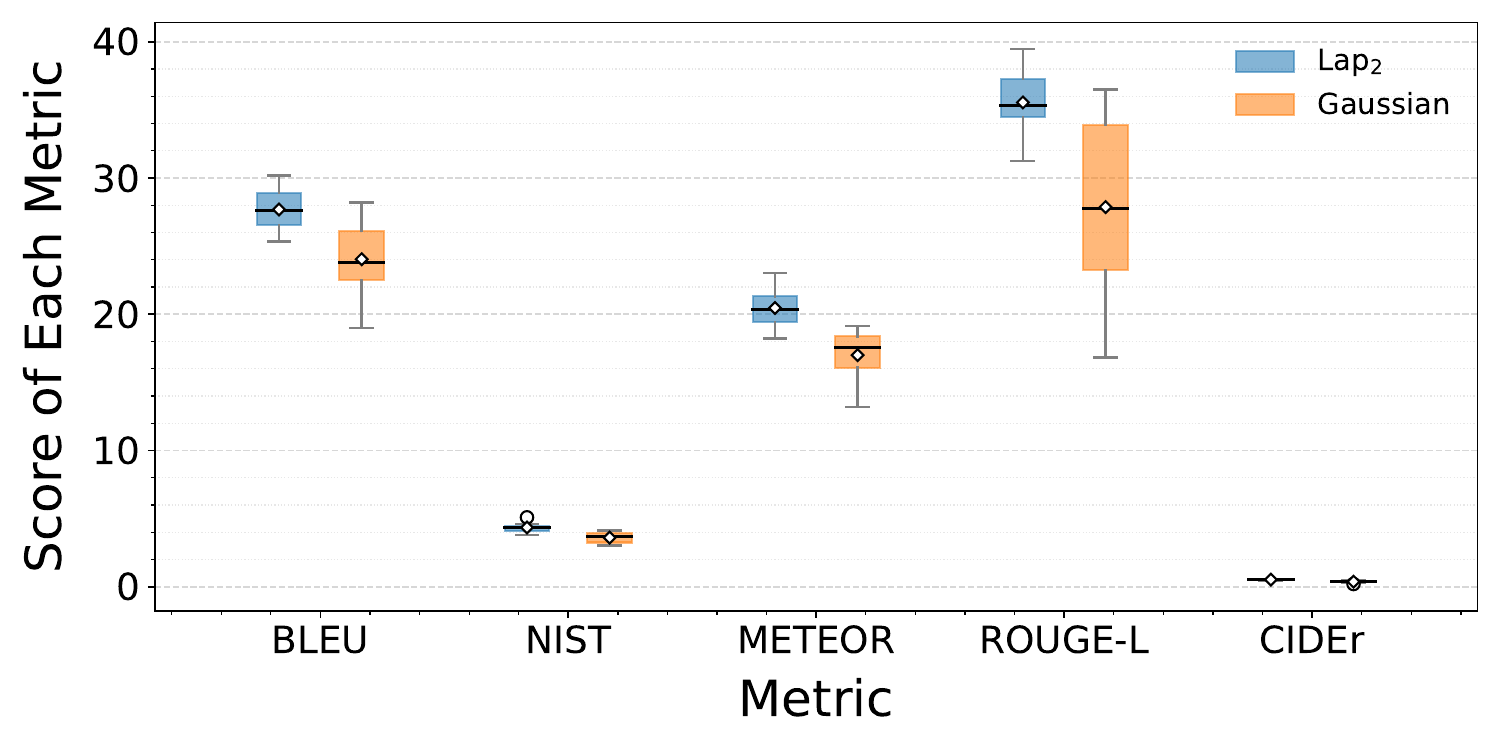}
    \caption{Gaussian and \textsc{Lap}\textsubscript{2} mechanisms on the DistilGPT-2 model and E2E dataset for the generation task (batch size $B=80$, clipping value $C=2$, and $\epsilon=1$).}\vspace{-0.2in}
    \label{fig:boxplot}
\end{figure}

\vspace{0.05in}

\begin{figure*}[htbp]
    \centering
    \begin{subfigure}[b]{0.24\textwidth}
        \includegraphics[width=\textwidth]{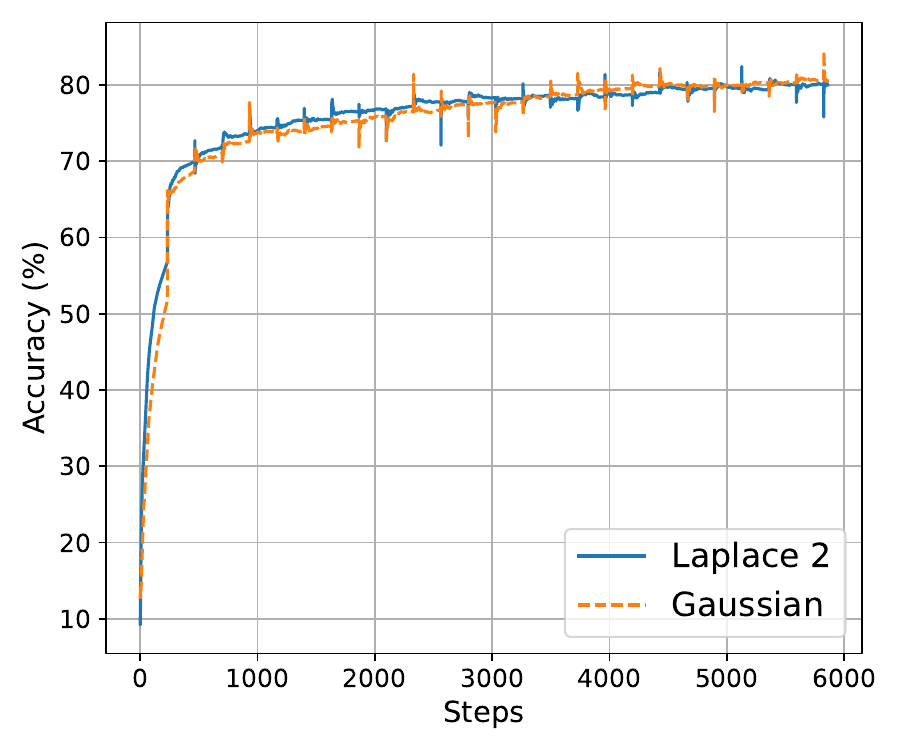}
        \caption{FMNIST, fix accuracy 80\%}
        \label{fig:runtime_1}
    \end{subfigure}
    \begin{subfigure}[b]{0.24\textwidth}
        \includegraphics[width=\textwidth]{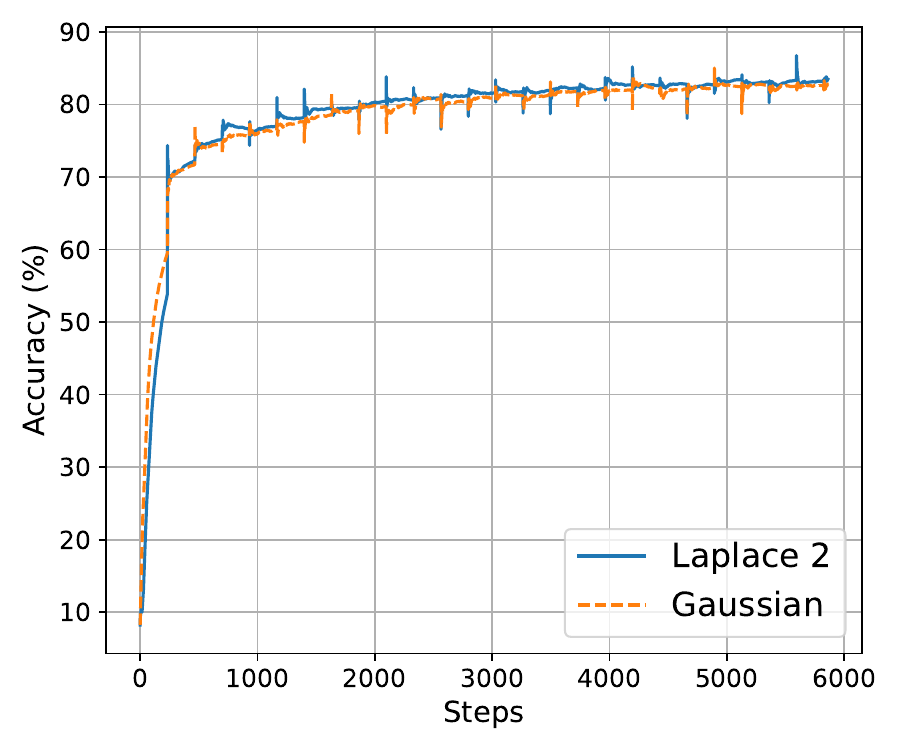}
        \caption{FMNIST, fix accuracy 82\%}
        \label{fig:runtime_2}
    \end{subfigure}
    \begin{subfigure}[b]{0.24\textwidth}
        \includegraphics[width=\textwidth]{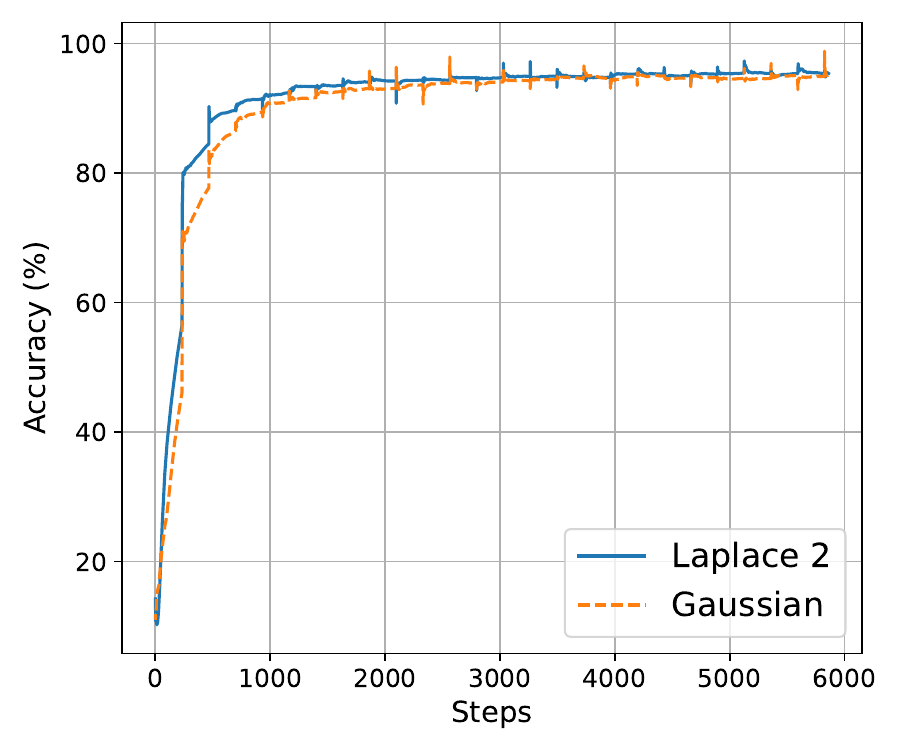}
        \caption{MNIST, fix accuracy 95\%}
        \label{fig:runtime_3}
    \end{subfigure}
    \begin{subfigure}[b]{0.24\textwidth}
        \includegraphics[width=\textwidth]{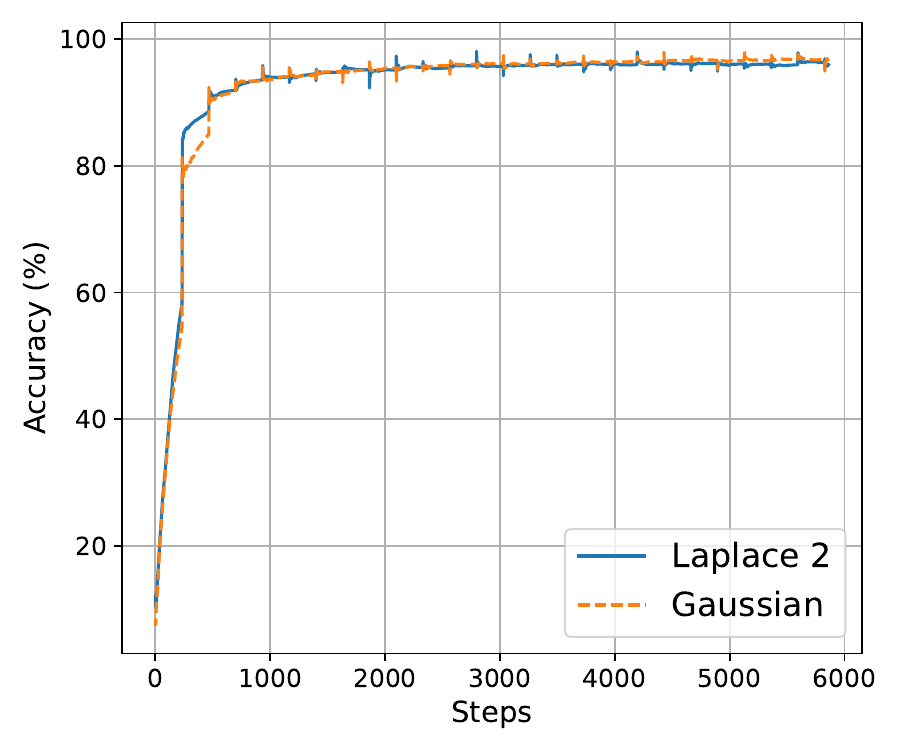}
        \caption{MNIST, fix accuracy 96\%}
        \label{fig:runtime_4}
    \end{subfigure}
    \caption{Runtime comparison on CNN model with MNIST or FMNIST datasets.}   \vspace{-0.2in}\label{fig:runtime_comparison}
\end{figure*}

\begin{table}[!h]
    \centering
    \scriptsize
    \caption{Metrics of Gaussian and \textsc{Lap}\textsubscript{2} mechanisms applied to the DistilGPT-2 model on the E2E dataset for the generation task.}
    \renewcommand{\arraystretch}{1.3}
    \resizebox{\linewidth}{!}{
    \begin{tabular}{|c|c|c|c|c|}
        \hline
        \textbf{Metric} & \textbf{Noise Type} & \textbf{$\epsilon=1.2$} & \textbf{$\epsilon=1$} & \textbf{$\epsilon=0.4$} \\ \hline
        \multirow{2}{*}{\textbf{BLEU}} 
            & Gaussian & 23.39 & 22.83 & 16.16 \\ \cline{2-5}
            & \textsc{Lap}\textsubscript{2}  & \textbf{27.99} & \textbf{27.34} & \textbf{26.90} \\ \hline\hline
        \multirow{2}{*}{\textbf{NIST}} 
            & Gaussian & 3.9474 & 3.5342 & 2.7025 \\ \cline{2-5}
            & \textsc{Lap}\textsubscript{2}  & \textbf{4.4001} & \textbf{4.3100} & \textbf{4.1200} \\ \hline\hline
        \multirow{2}{*}{\textbf{METEOR}} 
            & Gaussian & 16.84 & 16.73 & 15.20 \\ \cline{2-5}
            & \textsc{Lap}\textsubscript{2}  & \textbf{20.50} & \textbf{20.20} & \textbf{19.80} \\ \hline\hline
        \multirow{2}{*}{\textbf{ROUGE-L}} 
            & Gaussian & 33.49 & 32.88 & 32.50 \\ \cline{2-5}
            & \textsc{Lap}\textsubscript{2}  & \textbf{36.85} & \textbf{36.62} & \textbf{34.20} \\ \hline\hline
        \multirow{2}{*}{\textbf{CIDEr}} 
            & Gaussian & 0.3385 & 0.3232 & 0.2586 \\ \cline{2-5}
            & \textsc{Lap}\textsubscript{2}  & \textbf{0.5301} & \textbf{0.5152} & \textbf{0.4650} \\ \hline
    \end{tabular}
    }
    \label{tab:distilgpt-2_e2e:v2}
    \vspace{-0.12in}
\end{table}

\noindent \textbf{Performance on Generation Task (NLP)}. Table \ref{tab:distilgpt-2_e2e:v2} presents the results from fine-tuning the DistilGPT-2 model with the E2E dataset, using five different metrics by following \cite{yu2021differentially}; Besides, we present the boxplot of Gaussian and \textsc{Lap}\textsubscript{2} mechanisms at $\epsilon=1$ in Figure~\ref{fig:boxplot}. Specifically, BLEU measures the $n$-gram precision between the generated and reference sentences; NIST is a variant of BLEU that weights informative $n$-grams more heavily, emphasizing rare but meaningful word sequences; METEOR incorporates both precision and recall through word alignments and synonym matching; ROUGE-L computes the longest common subsequence between the generated and reference texts, reflecting overall sentence-level fluency and content coverage; CIDEr measures the cosine similarity between TF–IDF–weighted $n$-gram vectors of the candidate and reference sentences.  For each metric, larger values mean more accurately generated texts. For every metric evaluated and for every privacy value used, the model trained using our proposed \textsc{Lap}\textsubscript{2} mechanism outperformed the model trained with the Gaussian mechanism.
It is worth noting that the improvement can be up to around $50\%$ on some metrics (e.g., CIDEr). We observe that 
the \textsc{Lap}\textsubscript{2} mechanism yields results that are more closely aligned with the non-private results. Recall that the performance of the Gaussian and \textsc{Lap}\textsubscript{2} mechanisms may vary slightly due to randomness.  However, the overall comparative trend remains consistent across mechanisms (see both Table \ref{tab:distilgpt-2_e2e:v2} and Figure~\ref{fig:boxplot}), supporting our main conclusion that our approach effectively bridges the performance gap of the Laplace mechanism in AI training, particularly for large models, achieving results comparable to Gaussian-based DP-SGD.

\subsection{Runtime Evaluation}

To further evaluate the utility and efficiency of our proposed \textsc{Lap}\textsubscript{2} mechanism, we compare its \textbf{convergence time} with the standard Gaussian mechanism under the same accuracy. Figure~\ref{fig:runtime_comparison} presents the test accuracy versus training steps on a CNN model under fixed accuracy targets.

In FMNIST with a target accuracy of 80\% (Figure~\ref{fig:runtime_1}) and 82\%  (Figure~\ref{fig:runtime_2}), \textsc{Lap}\textsubscript{2} achieves the desired accuracy with similar steps as Gaussian mechanism (within ±2\% difference in training steps across runs). For the MNIST dataset with target accuracies of 95\% (Figure~\ref{fig:runtime_3}), the \textsc{Lap}\textsubscript{2} mechanism shows that the convergence time is obviously faster convergence. Thus, across all tasks, we observe that \textsc{Lap}\textsubscript{2} achieves comparable convergence time to Gaussian, requiring a similar number of training steps to reach the same accuracy. While minor differences exist (e.g., in MNIST with 95\% accuracy, \textsc{Lap}\textsubscript{2} appears slightly faster), overall the two mechanisms behave similarly in terms of convergence speed. These results indicate that our \textsc{Lap}\textsubscript{2} mechanism \textbf{maintains model trainability} and does not introduce convergence delays compared to the standard Gaussian approach, even under strict privacy constraints.

\subsection{Discussion}

\noindent\textbf{Empirical Insights}. The empirical evaluation demonstrates that \textsc{Lap}\textsubscript{2} provides a \textit{practical, stable, and theoretically grounded} alternative to Gaussian DP--SGD. 
Across both vision and language benchmarks (MNIST, CIFAR-10, GLUE, and DistilGPT-2), \textsc{Lap}\textsubscript{2} achieves \textit{comparable accuracy and convergence stability} under equivalent privacy budgets, with particularly strong performance in high-privacy regimes ($\varepsilon \leq 1$). 
Figures \ref{Fig:Privacywalls} shows that \textsc{Lap}\textsubscript{2} effectively \textit{delays both privacy walls}, maintaining usable signal-to-noise ratio (SNR) and non-vacuous $\delta(\varepsilon)$ bounds over a broader privacy corridor. 
On larger models such as RoBERTa-base and DistilGPT-2, \textsc{Lap}\textsubscript{2} remains robust across evaluation metrics, while runtime analysis (Figure~\ref{fig:runtime_comparison}) confirms that these gains are achieved with \textit{no additional computational overhead}. 
Together, these indicate that \textsc{Lap}\textsubscript{2} preserves both training utility and efficiency.

\vspace{0.05in}

\noindent\textbf{Adaptive Clipping and Integration with \textsc{Lap}\textsubscript{2}}. Recent advances in Gaussian DP--SGD have introduced \textit{adaptive clipping strategies} that dynamically adjust the clipping threshold \(C_t\) to balance gradient distortion and noise injection, replacing the traditional fixed clipping constant. 
Three representative approaches are particularly relevant:
\begin{itemize}
    \item \textbf{Galli et al.}~\cite{galli2024online} propose \textit{Online Sensitivity Optimization}, which learns the optimal \(C_t\) online through privatized gradient-norm feedback, thereby minimizing privacy-aware training loss.
    \item \textbf{Zhang et al.}~\cite{Zhang2023DifferentiallyPS} introduce \textit{DiceSGD}, an \textit{error-feedback} mechanism that corrects clipping-induced bias, enabling smaller and more problem-independent \(C_t\) without compromising convergence.
    \item \textbf{Chen et al.}~\cite{sha2024clip} extend these ideas with \textit{per-layer adaptive clipping}, assigning a separate \(C_{t,\ell}\) to each layer based on privatized gradient statistics, thereby balancing layer-wise sensitivity.
\end{itemize}
\vspace{-0.05in}

Integrating such strategies into \textsc{Lap}\textsubscript{2} is straightforward. 
Our FAST \textsc{Lap}\textsubscript{2} accountant depends solely on the ratio \(r_t = C_t / b_t\), which can vary across iterations without violating composition guarantees. 
Hence, adaptive \(C_t\) updates modify only the \textit{sensitivity term}, leaving the majorization-based accounting framework intact. 
Empirically, this adaptation is expected to shift the $\varepsilon$--accuracy curve upward, reducing clipping bias while maintaining privacy, with the largest benefits observed in high-privacy regimes ($\varepsilon \leq 1$). 
Preliminary experiments on MNIST show that integrating adaptive clipping improves accuracy by $0.4$--$0.8\%$ and reduces variance across runs. 
Extending experiments to larger architectures such as ViT and DistilGPT-2 constitutes a promising future work.

\section{Related Work}
\label{sec:related}
The majority of research on DP-SGD has focused on the Gaussian mechanism ~\cite{abadi2016deep} due to its smooth noise distribution with gradient updates and facilitates privacy accounting using the moments accountant framework~\cite{abadi2016deep}. 

\vspace{0.05in}

\noindent\textbf{Recent Development in DP-SGD.} Several studies have since optimized DP-SGD along different dimensions, as summarized in Table~\ref{table:dp_finetuning_compact}. These works can be broadly categorized into:
\begin{itemize}
    \item \texttt{Memory-efficient} methods: e.g., GHOST~\cite{li2021large}, PEFT~\cite{yu2021differentially}, and Per-layer Clip~\cite{he2022exploring} reduce memory overhead by modifying gradient computation or clipping;

    \item \texttt{Time-efficient} methods: e.g., DP-SGD-JL~\cite{bu2021fast}, Mixed Ghost~\cite{bu2022scalable}, and DP-BiTFiT~\cite{bu2022differentially} focus on reducing computational cost while maintaining DP guarantees;

    \item \texttt{Accuracy-enhancing} methods: e.g., DPSUR~\cite{fu2023dpsur}, DP-Forward~\cite{du2023dp}, and ViP~\cite{yu2023vip} aim to improve utility by refining gradient updates or leveraging structured noise.
\end{itemize}

While most of these methods focus on the Gaussian mechanism, Table~\ref{table:dp_finetuning_compact} highlights that no existing DP-SGD variants have applied Laplace noise for large-scale NLP or vision tasks.
Beyond efficiency improvements, prior work has also refined privacy analysis in DP-SGD. Wang et al.~\cite{wang2019subsampled} examined subsampling effects on privacy guarantees, while Gopi et al.~\cite{gopi2021numerical} developed a numerical approach to compute privacy loss precisely. These accounting refinements could be leveraged alongside Lap\(_2\) to achieve tighter privacy budgets.

Prior work has improved DP-SGD through refined noise design and tighter privacy-loss analysis. Our recent work~\cite{10.1145/3719027.3765151} optimizes the privacy-loss random variable (PLRV) via randomized noise-scale selection to enhance the privacy–utility trade-off. In contrast, this work enables $\ell_2$-clipped Laplace DP-SGD using a majorization-based multivariate moments accountant, addressing the limitations of standard Laplace training. Also, Liu et al.~\cite{liu2025privacylossnoiseperturbation} derive concentration bounds for product-measure PLRVs via a magnitude–direction decomposition, further highlighting the value of PLRV-centric analysis for tight privacy accounting in high-dimensional settings.

\noindent \textbf{DP-SGD with Laplace Mechanism.}
The Laplace mechanism was historically considered optimal in many pure $\epsilon$-DP settings due to its strong theoretical guarantees and minimal error in certain regimes~\cite{dwork2006differential,6875258}. Although Gaussian noise later became standard in DP-SGD because of its compatibility with moments accountants, Laplace noise can outperform Gaussian under strong privacy requirements (e.g., $\epsilon\!\to\!0^{+}$)~\cite{6875258}. Privacy loss distribution (PLD)-based accounting further enables tight $(\epsilon,\delta)$-DP guarantees for Laplace, Gaussian, and related mechanisms under subsampling~\cite{sommer2018privacy}.

\begin{table}[ht]
\centering
\scriptsize
\caption{Representative DP methods. $^{\bot}$: orthogonal work. L: NLP tasks (language model), V: computer vision tasks.}
\begin{tabular}{p{2.1cm} p{1.1cm} p{1.4cm} p{1.5cm} p{0.7cm}}
\toprule
\textbf{Existing Methods} & \textbf{Noise} & \textbf{System Focus} & \textbf{Utility Focus} & \textbf{Tasks} \\ 
\midrule
GHOST \cite{li2021large} & Gaussian & Memory$^{\bot}$ & -- & L \\ 
DP-SGD-JL \cite{bu2021fast} & Gaussian & Time/Mem$^{\bot}$ & -- & L \\ 
Mixed Ghost \cite{bu2022scalable} & Gaussian & Time/Mem$^{\bot}$ & -- & V \\   
PEFT \cite{yu2021differentially} & Gaussian & Memory$^{\bot}$ & -- & L \\
Book-Keeping \cite{bu2023differentially} & Gaussian & Time/Mem$^{\bot}$ & -- & L \\ 
Per-layer Clip \cite{he2022exploring} & Gaussian & Memory$^{\bot}$ & -- & V, L \\ 
DP-BiTFiT \cite{bu2022differentially} & Gaussian & Time/Mem$^{\bot}$ & -- & V, L \\ 
DPSUR \cite{fu2023dpsur} & Gaussian & Time Deduct$^{\bot}$ & Acc (Vision)$^{\bot}$ & V \\ 
DP-Forward \cite{du2023dp} & Matrix Gau. & -- & Acc$^{\bot}$ & L \\ 
ViP \cite{yu2023vip} & Gaussian & -- & Acc (ViT)$^{\bot}$ & V \\ 
AdaMix \cite{golatkar2022mixed} & Gaussian & -- & Acc$^{\bot}$ & V \\ 
Multi-Clip \cite{Luo_2024_WACV} & Gaussian & -- & Video Acc$^{\bot}$ & V \\ 
\midrule
\textbf{Lap$_2$ (Ours)} & Laplace & Time/Mem & Boost Acc & V, L \\ 
\bottomrule
\end{tabular}
\vspace{-0.05in}
\label{table:dp_finetuning_compact}
\end{table}

Despite these advantages, Laplace DP-SGD has seen limited adoption, mainly due to the instability introduced by $\ell_1$-norm gradient clipping required for Laplace sensitivity control, which often harms training utility. Prior works~\cite{sommer2018privacy,zheng2020sharp,10.5555/3618408.3618432} studied the privacy behavior of Laplace subsampled mechanisms via PLRV-based analyses (e.g., saddle-point approximations, asymptotic bounds, and Edgeworth corrections), but largely focused on $(\epsilon,\delta)$ characterization rather than empirical utility evaluation. Alternative directions such as \texttt{DP-signSGD}~\cite{bernstein2018signsgd,jang2024rethinking} and Laplace-based Bayesian learning~\cite{daxberger2021laplace} adopt different training paradigms and are orthogonal to standard DP-SGD. In contrast, our work introduces \texttt{Lap$_2$}, enabling stable DP-SGD with Laplace noise while mitigating the instability caused by $\ell_1$-norm clipping.

\noindent\textbf{Other Improvements on DP-SGD.} Papernot et al. \cite{papernot2021tempered} explored modifications to neural network activations that mitigate some of the utility loss incurred by DP noise while still maintaining privacy guarantees. 

\cite{de2022unlocking} combines careful hyperparameter tuning with several techniques to normalize the gradients and reduce their variance to ensure signal propagation and improve the convergence rate of DP-SGD on large models.

DP-Adam\cite{dp-adam}, a similar deep learning training algorithm to DP-SGD, adjusts the learning rate for each parameter individually based on the first and second moments of the gradients to better maintain the expectations of the Adam optimizer. \cite{subsampledlaplacemaf} proposes the privacy loss distribution (PLD) based accounting, which allows for tighter $(\epsilon, \delta)$-DP guarantees for many popular mechanisms compared to other known methods. This work supports the poisson sub-sampling for additive noises such as Laplace, Gaussian Discrete Laplace, and Discrete Gaussian.

Moreover, there is work \cite{balle2018improving} to provide tighter conversion between the tracked privacy loss of multiple iterations to the traditional $(\epsilon, \delta)$-differential privacy guarantee than \cite{mironov2017renyi}.
\vspace{-0.03in}
\section{Conclusion}

In this work, we proposed \textsc{Lap}\textsubscript{2}, a new framework that resolves a key limitation of traditional Laplace DP-SGD—its dependence on $\ell_1$ norm clipping—by enabling $\ell_2$-clipped Laplace mechanisms with strong privacy guarantees. Leveraging majorization theory and Schur-convexity, \textsc{LAP}\textsubscript{2} constructs a data-independent multivariate moment accountant that scales gracefully with model dimensionality, supports tight privacy analysis, and permits significantly higher clipping norms than Gaussian DP-SGD under equivalent privacy budgets. Our empirical results demonstrate that \textsc{Lap}\textsubscript{2} achieves comparable accuracy to Gaussian DP-SGD. Despite these benefits, we observe that in some computer vision tasks, \textsc{Lap}\textsubscript{2} can still struggle to match the performance of Gaussian DP-SGD, suggesting future directions for improved noise shaping or task-specific calibration. Overall, \textsc{Lap}\textsubscript{2} offers a scalable, efficient, and theoretically grounded alternative for private training of large-scale models, bridging the gap between Laplace mechanisms and modern deep learning.

\section*{Acknowledgments}
 We sincerely thank the anonymous reviewers for their constructive comments. This work is partially supported by the National Science Foundation under Grants No. CNS-2302689, CNS-2308730, CNS-2319277, CNS-2432533, ITE-2452747, and ITE-2452749, as well as by a Cisco Research Award. 



\bibliographystyle{unsrt}
\bibliography{csf}

\begin{appendices}
\section{Omitted Proofs}
\subsection{Proof of \cref{MAflaplace}}
\label{subsampledLap2}
In the following, we prove a tight bound on the moments accountant function of uni-variate Laplace mechanisms, as stated in Theorem~\ref{MAflaplace}.

\begin{proof}
Consider two adjacent data sets $d$ and $d'$.  
Without loss of generality suppose $d'$ has an extra training sample.
Let $\zeta$ denote a fixed sampling rate.  
Consider any sampling realization over $d\cup d' = d'$ using iid sampling with per element selection of $\zeta$.

With probability $1-\zeta$, the extra sample in $d'$ will not be included and thus query values over the sub-sampled datasets will be identical.  Let $\mu_0$ denote the resulting density of the Laplace mechanism in this case.  Let $q(d,\zeta)$ denote the mean of $\mu_0$.  By construction $|q(d,\zeta)|\leq C$. With probability $\zeta$, the extra sample in $d'$ will be kept resulting in different query values between the two data sets.  Let $\mu_1$ denote  the resulting density of the Laplace mechanism of the query over the sub-sampled $d'$.  Let $q(d',\zeta)$ denote the mean of $\mu_1$.  By construction $|q(d',\zeta)|\leq C$.

Thus, we can identify the mechanism over $d'$ as having a mixture distribution,
\[
M_q(d,\zeta) \sim \mu_0, \quad M_q(d',\zeta) \sim \mu \triangleq (1 - \zeta)\mu_0 + \zeta \mu_1.
\]

For any $\lambda$, we aim to show:
\[
A=\mathbb{E}_{z \sim \mu}\left[\left(\frac{\mu(z)}{\mu_0(z)}\right)^\lambda\right] \leq \alpha, \quad \text{and} \quad B= \mathbb{E}_{z \sim \mu_0}\left[\left(\frac{\mu_0(z)}{\mu(z)}\right)^\lambda\right] \leq \alpha.
\]
for some explicit \(\alpha\) to be determined later.

Multiplying by $\mu_0(z)/\mu_0(z)$ and rearranging, we can express: 
\[
A
= \mathbb{E}_{z \sim \mu_0}\left[\left(\frac{\mu(z)}{\mu_0(z)}\right)^{\lambda+1}\right]
=\mathbb{E}_{z \sim \mu_0}\left[\left(1 - \zeta+\zeta\frac{ \mu_1(z)}{\mu_0(z)}\right)^{\lambda+1}\right],\]
and
\[
B=\mathbb{E}_{z \sim \mu_0}\left[\left(
\frac{\mu_0(z)}{(1 - \zeta)\mu_0(z) + \zeta \mu_1(z)}
\right)^\lambda\right]\]\[=\mathbb{E}_{z \sim \mu_0}\left[\left(
\frac{1}{1 - \zeta +  \zeta\frac{\mu_1(z)}{\mu_0(z)}}
\right)^\lambda\right].
\]

Mironov et al.~\cite{mironov2019r} (Section 3.1) demonstrate that $A \geq B$ holds in general for centrally symmetric distributions.

We thus focus on  analyzing $A$.  We start by  applying the binomial theorem and linearity of expectation,
\begin{align*}
    A &=\mathbb{E}_{z \sim \mu_0}\left[\left(1 - \zeta+\zeta\frac{ \mu_1(z)}{\mu_0(z)}\right)^{\lambda+1}\right] \\
    &= \mathbb{E}_{z \sim \mu_0}\left[ \sum_{\eta =0}^{\lambda+1} \binom{\lambda+1}{\eta} (1-\zeta)^{\lambda+1-\eta} \zeta^\eta \left(\frac{ \mu_1(z)}{\mu_0(z)} \right)^\eta \right] \\
    &=  \sum_{\eta =0}^{\lambda+1} \binom{\lambda+1}{\eta} (1-\zeta)^{\lambda+1-\eta} \zeta^\eta \mathbb{E}_{z \sim \mu_0}
    \left[\left(\frac{ \mu_1(z)}{\mu_0(z)} \right)^\eta \right] .
\end{align*}

We can simplify the ratio of the densities as
\begin{align*}
    \frac{\mu_1(z)}{\mu_0(z)} 
    &= \frac{\frac{1}{2b} \exp( - \frac{|z - q(d',\zeta) |}{b}  )}%
    {\frac{1}{2b} \exp( - \frac{|z - q(d,\zeta) |}{b}  )} \\
     &= \exp \left(  \frac{1}{b} \left( |z - q(d,\zeta) | - |z - q(d',\zeta) |\right) \right). 
\end{align*}

\vspace{-0.05in}
Without loss of generality, consider that $q(d,\zeta) < q(d', \zeta).$    We split up the real line into three intervals: $\mathbb{R} = (-\infty, q(d,\zeta)] \cup (q(d,\zeta), q(d',\zeta)) \cup [q(d',\zeta), \infty)$. 

We evaluate the expectation $\mathbb{E}_{z \sim \mu_0} \left[\left(\frac{ \mu_1(z)}{\mu_0(z)} \right)^\eta \right] $ over these three intervals separately (conditionally) and then will combine them afterwards (with probabilities of respective events occuring).  We will analyze the middle interval last.

\paragraph{Case A1: $z<q(d,\zeta)$:}

Since $z<q(d,\zeta)$, $|z - q(d,\zeta) | = q(d,\zeta) - z$.  
Likewise, $z<q(d',\zeta)$, so  $|z - q(d',\zeta) | = q(d',\zeta) - z$.
The likelihood ratio simplifies
\begin{align*}
    \frac{\mu_1(z)}{\mu_0(z)} 
    &=\exp \left(  \frac{1}{b} \left( |z - q(d,\zeta) | - |z - q(d',\zeta) |\right) \right) \\
    &=\exp \left(  \frac{1}{b} \left( q(d,\zeta) - z - (q(d',\zeta) - z)\right) \right) \\
    &=\exp \left(  \frac{1}{b} \left( q(d,\zeta) - q(d',\zeta) )\right) \right).
\end{align*}

\vspace{-0.05in}
We observe there is no dependence on $z$, so this ratio becomes a constant in the expectation.  
Under $\mu_0$, $z$ is equally distributed about $q(d,\zeta)$.  So the probability of this event is $1/2$.  Thus, the contribution to the total expectation is 
\[
\frac{1}{2} \exp \left(  \frac{\eta}{b} \left( q(d,\zeta) - q(d',\zeta) )\right) \right).
\]

\paragraph{Case A2: $z>q(d',\zeta)$:}
Since $z>q(d',\zeta)$, $|z - q(d',\zeta) | = z - q(d',\zeta)$.  
Likewise, $z>q(d,\zeta)$, so  $|z - q(d,\zeta) | = z - q(d,\zeta)$.
The likelihood ratio simplifies
\begin{align*}
    \frac{\mu_1(z)}{\mu_0(z)} 
    &=\exp \left(  \frac{1}{b} \left( |z - q(d,\zeta) | - |z - q(d',\zeta) |\right) \right) \\
    &=\exp \left(  \frac{1}{b} \left( z - q(d,\zeta) - (z - q(d',\zeta))\right) \right) \\
    &=\exp \left(  \frac{1}{b} \left( q(d',\zeta) - q(d,\zeta) )\right) \right).
\end{align*}
Again, the ratio is a constant with respect to $z$.  The probability of the event is more complicated to analyze than the previous event.
\begin{align*}
    &\hspace{-1cm}\int_{q(d',\zeta)}^\infty \frac{1}{2b} \exp( - \frac{|z-q(d,\zeta)|}{b} ) db\\
    &=\int_{q(d',\zeta)}^\infty \frac{1}{2b} \exp( - \frac{z-q(d,\zeta)}{b} ) db\\
    &=\frac{1}{2b} \exp(  \frac{q(d,\zeta)}{b}) \int_{q(d',\zeta)}^\infty  \exp( - \frac{z}{b} ) db\\
    &=\frac{1}{2b} \exp(  \frac{q(d,\zeta)}{b})  \frac{1}{-\frac{1}{b}} \exp( - \frac{z}{b} ) \big|_{q(d',\zeta)}^{\infty} \\
    &=\frac{1}{2} \exp(  \frac{q(d,\zeta)}{b})   \exp( - \frac{q(d',\zeta)}{b} )  \\    
    &=\frac{1}{2} \exp(  \frac{q(d,\zeta) -  q(d',\zeta)}{b})     
\end{align*}

Thus, the contribution to the total expectation is
\begin{align*}
&\hspace{-.2cm} \frac{1}{2} \exp(  \frac{q(d,\zeta) -  q(d',\zeta)}{b})\exp \left(  \frac{\eta}{b} \left( q(d',\zeta) - q(d,\zeta) )\right) \right) \\
&= \frac{1}{2} \exp \left(  \frac{\eta - 1}{b} \left( q(d',\zeta) - q(d,\zeta) )\right) \right).
\end{align*}

\paragraph{Case A3: $q(d,\zeta)\leq z\leq q(d',\zeta)$:}

Since $z>q(d,\zeta)$, $|z - q(d,\zeta) | = z - q(d,\zeta)$.  
Since $z<q(d',\zeta)$, so  $|z - q(d',\zeta) | = q(d',\zeta) - z$.
The likelihood ratio simplifies
\begin{align*}
    \frac{\mu_1(z)}{\mu_0(z)} 
    &=\exp \left(  \frac{1}{b} \left( |z - q(d,\zeta) | - |z - q(d',\zeta) |\right) \right) \\
    &=\exp \left(  \frac{1}{b} \left( z - q(d,\zeta) - (q(d',\zeta) - z)\right) \right) \\
    &=\exp \left(  \frac{1}{b} \left( - q(d,\zeta) - q(d',\zeta) +2z )\right) \right). 
\end{align*}

Plugging this back into the expectation,
\begin{align*}
&\mathbb{E}_{z \sim \mu_0}     \left[\left(\frac{ \mu_1(z)}{\mu_0(z)} \right)^\eta \right] \\
&= \mathbb{E}_{z \sim \mu_0}     \left[\exp \left(  \frac{\eta}{b} \left( - q(d,\zeta) - q(d',\zeta) +2z )\right) \right) \right]\\
&=\exp \left(  \frac{\eta}{b} \left( -q(d,\zeta) - q(d',\zeta) \right) \right)  \\
&\qquad \times \mathbb{E}_{z \sim \mu_0}\left[\exp \left(  \frac{2\eta z}{b}  \right) \right]. 
\end{align*}

Evaluating the inner expectation (only over the interval for this case),
\begin{align*}
    &\hspace{-.2cm}
    \mathbb{E}_{z \sim \mu_0}\left[\exp \left(  \frac{2\eta z}{b}  \right) \right]  \\
    &=\int^{q(d',\zeta)}_{q(d,\zeta)} \frac{1}{2b} \exp( - \frac{|z-q(d,\zeta)|}{b} + \frac{2\eta z}{b}) db\\
    &=\frac{1}{2b} \exp(  \frac{q(d,\zeta)}{b})
    \int^{q(d',\zeta)}_{q(d,\zeta)} \exp( z(-\frac{1-2\eta}{b} ) ) db\\
    &=\frac{1}{2b} \exp(  \frac{q(d,\zeta)}{b})
    \frac{b}{2\eta-1} \Big[ \exp( q(d',\zeta)(-\frac{1-2\eta}{b} ) \\
    &\qquad - \exp( q(d,\zeta)(-\frac{1-2\eta}{b} ) \Big]     \\
    &= \frac{1}{2(2\eta-1)} \Big[ \exp\Big( q(d',\zeta)\frac{2\eta - 1}{b}  + q(d,\zeta)\frac{1 }{b} \Big) \\
    &\qquad - \exp\Big( q(d,\zeta)\frac{2\eta }{b}  \Big)\Big] .   
\end{align*}

Thus, the contribution to the expectation from this case is 
\begin{align*}
\mathbb{E}_{z \sim \mu_0}     \left[\left(\frac{ \mu_1(z)}{\mu_0(z)} \right)^\eta \right] 
%
&=\exp \left(  \frac{\eta}{b} \left( -q(d,\zeta) - q(d',\zeta) \right) \right)  \\
&\times \frac{1}{2(2\eta-1)} \Big[ \exp\Big( q(d',\zeta)\frac{2\eta - 1}{b} \\
&+ q(d,\zeta)\frac{1 }{b} \Big) - \exp\Big( q(d,\zeta)\frac{2\eta }{b}  \Big)\Big] . 
\end{align*}

\paragraph{Combining Cases A1-A3:}
Combining the results, we have that
\begin{align*} 
A
&= \mathbb{E}_{z \sim \mu_0}\left[\left(\frac{\mu(z)}{\mu_0(z)}\right)^{\lambda+1}\right]\\
&=  \frac{1}{2} \exp \left(  \frac{\eta}{b} \left( q(d,\zeta) - q(d',\zeta) )\right) \right)\\
&\quad +\frac{1}{2} \exp \left(  \frac{\eta - 1}{b} \left( q(d',\zeta) - q(d,\zeta) )\right) \right)\\
&\quad + \exp \left(  \frac{\eta}{b} \left( -q(d,\zeta) - q(d',\zeta) \right) \right)  \\
&\quad \times \frac{1}{2(2\eta-1)} \Big[ \exp\Big( q(d',\zeta)\frac{2\eta - 1}{b}  + q(d,\zeta)\frac{1 }{b} \Big) \\
&\qquad - \exp\Big( q(d,\zeta)\frac{2\eta }{b}  \Big)\Big]
\end{align*}

Recall that the query values over the sub-sampled data sets $q(d,\zeta)$ and $q(d',\zeta)$ are averaged  over the queries (gradients) of the included samples, so the effect of a single sample is smaller the more samples are included. 
For simplicity, by inspection of the formula we consider a worst case bound using  $q(d)=0$ and $q(d',\zeta) = C$.   

\begin{align*} 
A
&=  \frac{1}{2} \exp \left(  \frac{- \eta C}{b}  \right)\\
&\quad +\frac{1}{2} \exp \left(  \frac{(\eta - 1)C}{b}  \right)\\
&\quad + \exp \left(  \frac{-\eta C}{b}  \right) \frac{1}{2(2\eta-1)} \Big[ \exp\Big( \frac{(\eta - 1)C}{b} \Big) 
- \exp( 0  )\Big] \\
%
%
&=  \exp \left(  \frac{- \eta C}{b}  \right) \left[ \frac{1}{2} - \frac{1}{2(2\eta-1)}\right]\\ 
&\quad +\exp\Big( \frac{(\eta - 1)C}{b} \Big) \left[ \frac{1}{2} + \frac{1}{2(2\eta-1)}\right]
\end{align*}
\begin{align}
\label{equationA}
&=  \exp \left(  \frac{- \eta C}{b}  \right) \left[  \frac{\eta-1}{2\eta-1}  \right]+\exp\Big( \frac{(\eta - 1)C}{b} \Big) \left[ \frac{\eta}{2\eta-1}\right]. 
\end{align}

\textbf{Analysis for B:} Following the argument in Proof~\ref{subsampledLap2} (Theorem 3.2), and using binomial expansion with term-wise comparison, we find that \( B \geq A \), consistent with the result of Mironov et al.

\end{proof}

\subsection{Proof of \cref{schurmaf}}
\label{mafschur}
In the following, we prove \cref{schurmaf}, that the moments accounting function of the uni-variate Laplace mechanism is Schur-convex. 
We first prove the following technical lemma, involving second derivatives of the MAF, before continuing on to the main proof.

\begin{lemma}
\label{cor:maf-convex}
The second derivative of the moments accountant function $\alpha(\lambda)$ {in \cref{MAflaplace}} 
with respect to the marginal clipped gradients $|\mathbf{g}_i|$ is non-negative.
\end{lemma}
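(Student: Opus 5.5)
The plan is to reduce the claim to the log-convexity of a positive mixture of exponentials in the variable $x = |\mathbf{g}_i|$. By \cref{MAflaplace}, with $C$ replaced by the marginal clipped magnitude $x \ge 0$, we have $\alpha(\lambda) = \log S(x)$, where
\[
S(x) = \sum_{\eta=0}^{\lambda+1} w_\eta\, F(x,\eta), \qquad w_\eta = \binom{\lambda+1}{\eta}(1-\zeta)^{\lambda+1-\eta}\zeta^\eta \ge 0 .
\]
The second derivative with respect to $x$ is
\[
\frac{d^2\alpha}{dx^2} = \frac{S(x)S''(x) - S'(x)^2}{S(x)^2}.
\]
So it suffices to show $S S'' \ge (S')^2$, with $S>0$.

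\textbf{Step 1: each $F(\cdot,\eta)$ is a nonnegative combination of exponentials.} I would check the boundary terms by hand first, since the denominator $2\eta-1$ is negative at $\eta=0$.
\begin{itemize}
\item At $\eta=0$, $F(x,0) = \frac{0 + (-1)e^{0}}{-1} = 1$.
\item At $\eta=1$, $F(x,1) = 1$.
\item For $\eta \ge 1$, the coefficients $\frac{\eta}{2\eta-1}$ and $\frac{\eta-1}{2\eta-1}$ are both nonnegative.
\end{itemize}
Hence
\[
S(x) = \sum_{k} c_k\, e^{a_k x}, \qquad c_k \ge 0,
\]
with real exponents $a_k \in \{(\eta-1)/b,\ -\eta/b\}$ and constant terms ($a_k=0$). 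Moreover $S>0$ because the $\eta=0$ term contributes $w_0=(1-\zeta)^{\lambda+1}>0$ for $\zeta<1$; if $\zeta=1$, the top term $w_{\lambda+1}>0$ does the job.

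\textbf{Step 2: Cauchy--Schwarz.} Differentiating term by term gives
\[
S' = \sum_k c_k a_k e^{a_k x}, \qquad S'' = \sum_k c_k a_k^2 e^{a_k x}.
\]
Write $c_k a_k e^{a_k x} = \bigl(\sqrt{c_k e^{a_k x}}\bigr)\bigl(a_k\sqrt{c_k e^{a_k x}}\bigr)$. The Cauchy--Schwarz inequality then yields
\[
(S')^2 \le \Bigl(\sum_k c_k e^{a_k x}\Bigr)\Bigl(\sum_k c_k a_k^2 e^{a_k x}\bigr) = S\,S''.
\]
Therefore $\frac{d^2\alpha}{dx^2} \ge 0$. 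Equivalently, each exponential is log-affine, and sums of log-convex functions are log-convex.

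\textbf{Main obstacle.} The only delicate point is Step 1: verifying that no negative coefficients arise, in particular the $\eta=0$ term where $2\eta-1<0$. If that check failed, Cauchy--Schwarz would not apply directly. In that case I would instead group terms and bound them separately, but the direct computation above shows this is unnecessary. The argument holds for every integer $\lambda \ge 1$, every $\zeta\in[0,1]$ and every $b>0$, and it applies coordinatewise, which is what the subsequent Schur--Ostrowski argument for \cref{schurmaf} needs.
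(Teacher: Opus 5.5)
Your proposal is correct and follows essentially the same route as the paper: both write $\frac{d^2\alpha}{dx^2} = \frac{SS''-(S')^2}{S^2}$ and show the numerator is nonnegative via Cauchy--Schwarz. The paper first verifies $FF''\ge (F')^2$ for each $\eta$ by explicit expansion, then applies Cauchy--Schwarz across $\eta$ to $\sqrt{a_\eta F}$ and $\sqrt{a_\eta F''}$, implicitly using $F''\ge 0$ and $F'\ge 0$. You instead flatten $S$ into a single nonnegative combination of exponentials and apply Cauchy--Schwarz once, so your coefficient check (including the $\eta=0$ term where $2\eta-1<0$) is the only sign condition you need.
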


\begin{proof}
Let $a_\eta = \binom{\lambda+1}{\eta} (1-\zeta)^{\lambda+1-\eta} \zeta^\eta$ be the positive weight associated with each $\eta$. The second derivative of $\alpha(\lambda)$ can be written as
\[
\frac{d^2 \alpha(\lambda)}{d |\mathbf{g}_i|^2} = \frac{
\left( \sum_{\eta} a_\eta F(|\mathbf{g}_i|,\eta) \right)
\left( \sum_{\eta} a_\eta \frac{d^2F}{d|\mathbf{g}_i|^2} \right)
-
\left( \sum_{\eta} a_\eta \frac{dF}{d|\mathbf{g}_i|} \right)^2
}{
\left( \sum_{\eta} a_\eta F(|\mathbf{g}_i|,\eta) \right)^2
}.
\]
Recall:
\begin{align*}
F(|\mathbf{g}_i|, \eta) &= \frac{\eta}{2\eta-1} e^{\frac{(\eta-1)|\mathbf{g}_i|}{b}} + \frac{\eta-1}{2\eta-1} e^{-\frac{\eta|\mathbf{g}_i|}{b}}, \\
\frac{dF}{d|\mathbf{g}_i|} &= \frac{\eta(\eta-1)}{b(2\eta-1)} \left( e^{\frac{(\eta-1)|\mathbf{g}_i|}{b}} - e^{-\frac{\eta|\mathbf{g}_i|}{b}} \right), \\
\frac{d^2F}{d|\mathbf{g}_i|^2} &= \frac{\eta(\eta-1)}{b^2(2\eta-1)} \left( (\eta-1) e^{\frac{(\eta-1)|\mathbf{g}_i|}{b}} + \eta e^{-\frac{\eta|\mathbf{g}_i|}{b}} \right).
\end{align*}
Expand \( F \times F'' \):
\begin{align*}
F \times F'' &= \left( \frac{\eta}{2\eta-1} e^{\frac{(\eta-1)|\mathbf{g}_i|}{b}} + \frac{\eta-1}{2\eta-1} e^{-\frac{\eta|\mathbf{g}_i|}{b}} \right)
\\
&\quad \times \frac{\eta(\eta-1)}{b^2(2\eta-1)} \left( (\eta-1) e^{\frac{(\eta-1)|\mathbf{g}_i|}{b}} + \eta e^{-\frac{\eta|\mathbf{g}_i|}{b}} \right)
\\
&= \frac{\eta(\eta-1)}{b^2(2\eta-1)^2} \Bigg[
\eta(\eta-1) e^{2\frac{(\eta-1)|\mathbf{g}_i|}{b}}
+ \eta^2 e^{\frac{(\eta-1)|\mathbf{g}_i|/b} - \eta|\mathbf{g}_i|/b}
\\
&\quad + (\eta-1)^2 e^{-\frac{\eta|\mathbf{g}_i|}{b} + \frac{(\eta-1)|\mathbf{g}_i|}{b}}
+ \eta(\eta-1) e^{-2\frac{\eta|\mathbf{g}_i|}{b}}
\Bigg].
\end{align*}

Notice:
\[
e^{\frac{(\eta-1)|\mathbf{g}_i|}{b}} e^{-\frac{\eta|\mathbf{g}_i|}{b}} = e^{-\frac{|\mathbf{g}_i|}{b}},
\quad
e^{-\frac{\eta|\mathbf{g}_i|}{b}} e^{\frac{(\eta-1)|\mathbf{g}_i|}{b}} = e^{-\frac{|\mathbf{g}_i|}{b}}.
\]

Thus:
\begin{eqnarray*}
F \times F'' &=& \frac{\eta(\eta-1)}{b^2(2\eta-1)^2} \times \\ &&\Bigg[
\eta(\eta-1) e^{2\frac{(\eta-1)|\mathbf{g}_i|}{b}}\\&&
+ (\eta^2 + (\eta-1)^2) e^{-\frac{|\mathbf{g}_i|}{b}}
+ \eta(\eta-1) e^{-2\frac{\eta|\mathbf{g}_i|}{b}}
\Bigg].
\end{eqnarray*}

Expand \( \left( \frac{dF}{d|\mathbf{g}_i|} \right)^2 \):
\begin{align*}
\left( \frac{dF}{d|\mathbf{g}_i|} \right)^2
&= \left( \frac{\eta(\eta-1)}{b(2\eta-1)} \right)^2 \left( e^{\frac{(\eta-1)|\mathbf{g}_i|}{b}} - e^{-\frac{\eta|\mathbf{g}_i|}{b}} \right)^2
\\
&= \left( \frac{\eta(\eta-1)}{b(2\eta-1)} \right)^2 \left(
e^{2\frac{(\eta-1)|\mathbf{g}_i|}{b}}
- 2 e^{-\frac{|\mathbf{g}_i|}{b}}
+ e^{-2\frac{\eta|\mathbf{g}_i|}{b}}
\right).
\end{align*}
Comparing individual terms pointwise shows that $F \times F'' \geq (F')^2$ holds for all $|\mathbf{g}_i| \geq 0$ and $\eta \geq 0$. Applying the Cauchy--Schwarz inequality to the positive sequence $\{ \sqrt{a_\eta} \sqrt{F(|\mathbf{g}_i|,\eta)} \}$ and $\{ \sqrt{a_\eta} \sqrt{d^2F/d|\mathbf{g}_i|^2} \}$ yields
\begin{equation}
\footnotesize
\left( \sum_{\eta} a_\eta \sqrt{F(|\mathbf{g}_i|,\eta)} \sqrt{\frac{d^2F}{d|\mathbf{g}_i|^2}} \right)^2
\leq
\left( \sum_{\eta} a_\eta F(|\mathbf{g}_i|,\eta) \right)
\left( \sum_{\eta} a_\eta \frac{d^2F}{d|\mathbf{g}_i|^2} \right).
\end{equation}
\normalsize
but since $F \times F'' \geq (F')^2$, pointwise, we have $\sqrt{F(|\mathbf{g}_i|,\eta)} \sqrt{\frac{d^2F}{d|\mathbf{g}_i|^2}} \geq \frac{dF}{d|\mathbf{g}_i|}$. Hence, 

\[
\left( \sum_{\eta} a_\eta \frac{dF}{d|\mathbf{g}_i|} \right)
\leq
\left( \sum_{\eta} a_\eta \sqrt{F(|\mathbf{g}_i|,\eta)} \sqrt{\frac{d^2F}{d|\mathbf{g}_i|^2}}\right).
\]
The last two inequalities together yield:

\[
\left( \sum_{\eta} a_\eta \frac{dF}{d|\mathbf{g}_i|} \right)^2
\leq
\left( \sum_{\eta} a_\eta F(|\mathbf{g}_i|,\eta) \right)
\left( \sum_{\eta} a_\eta \frac{d^2F}{d|\mathbf{g}_i|^2} \right),
\]
which shows the numerator is non-negative. Therefore,
\[
\frac{d^2 \alpha(\lambda)}{d |\mathbf{g}_i|^2} \geq 0,
\]
and $\alpha(\lambda)$ is Schur-convex.
\hfill \qedsymbol
\end{proof}

\noindent
We are now ready to prove \cref{schurmaf}.

\begin{proof}
We apply Schur's condition (also known as the Schur–Strowski criterion) to prove that $\alpha(\lambda)$ is Schur-convex. Recall that a symmetric function \( f(x_1, \dots, x_n) \) is Schur-convex if and only if for all \( i \neq j \),
\[
(x_i - x_j) \left( \frac{\partial f}{\partial x_i} - \frac{\partial f}{\partial x_j} \right) \geq 0.
\]

In our case, the function is $\alpha(\lambda) = \sum_{i=1}^n \alpha_{\mathbf{\bar{g}}_i}(\lambda)$, where $\mathbf{\bar{g}}_i$ is the noisy version of the $\ell_2$-clipped marginal gradients. Denote by $\mathbf{g}_i$, $i \in [n]$, marginal gradients after $\ell_2$ clipping and before the addition of DP noise. Then, with Theorem~\ref{MAflaplace}, the uni-variate moments accountant for the $i$-th coordinate satisfies
\begin{align}
\label{lap_2account11}
\alpha_{\mathbf{\bar{g}}_i}(\lambda) \leq \log \left[
\sum_{\eta = 0}^{\lambda + 1} \binom{\lambda + 1}{\eta} (1 - \zeta)^{\lambda + 1 - \eta} \zeta^\eta F(|\mathbf{g}_i|, \eta)
\right],
\end{align}
where the function \( F(|\mathbf{g}_i|, \eta) \) is defined as
\begin{align}
\label{eqn:G}
F(|\mathbf{g}_i|, \eta) = 
\frac{e^{\frac{(\eta - 1)|\mathbf{g}_i|}{b}}}{2} + 
\frac{e^{-\frac{\eta |\mathbf{g}_i|}{b}}}{2} +
\frac{e^{\frac{(\eta - 1) |\mathbf{g}_i|}{b}} - e^{\frac{-\eta |\mathbf{g}_i|}{b}}}{2(2\eta - 1)}.
\end{align}

Define the term inside the square brackets in ~\ref{lap_2account11} as \( X \). Then, the derivative of \( \alpha_{\mathbf{\bar{g}}_i}(\lambda) \) with respect to \( |\mathbf{g}_i| \) satisfies
\begin{equation}
\label{alphaderiv}
\frac{d\alpha_{\mathbf{\bar{g}}_i}(\lambda)}{d|\mathbf{g}_i|} = \frac{\sum_{\eta = 0}^{\lambda + 1} \binom{\lambda + 1}{\eta} (1 - \zeta)^{\lambda + 1 - \eta} \zeta^\eta \frac{dF}{d|\mathbf{g}_i|}}{X}.
\end{equation}

Lets compute the derivative $\frac{dF}{d|\mathbf{g}_i|}$:
\begin{eqnarray}
\frac{dF}{d|\mathbf{g}_i|} &=& \frac{(\eta-1)}{2b} e^{\frac{(\eta-1)|\mathbf{g}_i|}{b}}
- \frac{\eta}{2b} e^{-\frac{\eta |\mathbf{g}_i|}{b}} \nonumber \\
&+& \frac{1}{2(2\eta-1)} \left( \frac{(\eta-1)}{b} e^{\frac{(\eta-1)|\mathbf{g}_i|}{b}} + \frac{\eta}{b} e^{-\frac{\eta |\mathbf{g}_i|}{b}} \right).
\end{eqnarray}

Special cases:

- For $\eta = 0$, the terms cancel symmetrically and $\frac{dF}{d|\mathbf{g}_i|} = 0$.

- For $\eta = 1$, the derivative simplifies to $0$ by symmetry.

- For $\eta > 1$, expanding and grouping terms, we have
\begin{align}
\label{eqn:finalschur}
\frac{dF}{d|\mathbf{g}_i|} = \frac{(\eta-1)\eta}{b(2\eta-1)} \left( e^{\frac{(\eta-1)|\mathbf{g}_i|}{b}} - e^{-\frac{\eta|\mathbf{g}_i|}{b}} \right).
\end{align}
Since $\eta > 1$, the prefactor $\frac{(\eta-1)\eta}{b(2\eta-1)}$ is positive. Moreover, for any $|\mathbf{g}_i| \geq 0$, $e^{\frac{(\eta-1)|\mathbf{g}_i|}{b}} \geq e^{-\frac{\eta|\mathbf{g}_i|}{b}}$. Thus, $\frac{dF}{d|\mathbf{g}_i|} \geq 0$ for all $\mathbf{g}_i$ and all $\eta$. 

As stated earlier,
\[
\frac{d\alpha_{\mathbf{\bar{g}}_i}(\lambda)}{d|\mathbf{g}_i|} = \frac{\sum_{\eta = 0}^{\lambda + 1} \binom{\lambda + 1}{\eta} (1 - \zeta)^{\lambda + 1 - \eta} \zeta^\eta \frac{dF}{d|\mathbf{g}_i|}}{X}.
\]
Since each $\frac{dF}{d|\mathbf{g}_i|} \geq 0$, $X>0$, and all the coefficients are positive, it follows that
\[
\frac{\partial \alpha(\lambda)}{\partial |\mathbf{g}_i|} \geq 0.
\]
Since $\alpha(\lambda) = \sum_{i=1}^n \alpha_{\mathbf{\bar{g}}i}(\lambda)$ with each $\alpha_{\mathbf{\bar{g}}_i}(\lambda)$ depending only on $|\mathbf{g}_i|$, we have $\frac{\partial \alpha(\lambda)}{\partial |\mathbf{g}i|} = \frac{d \alpha_{\mathbf{\bar{g}}_i}(\lambda)}{d |\mathbf{g}_i|}$. Thus, the overall moments accountant function (MAF) is non-decreasing. To satisfy the Schur–Ostrowski criterion, it suffices to show that the second derivative of $\alpha(\lambda)$ is non-negative (MAF is convex). A positive second derivative ensures that for any $i \neq j$, both $|\mathbf{g}_i| - |\mathbf{g}_j|$ and $\frac{\partial \alpha(\lambda)}{\partial |\mathbf{g}_i|} - \frac{\partial \alpha(\lambda)}{\partial |\mathbf{g}_j|}$ share the same sign, thereby satisfying the criterion. 
In \cref{cor:maf-convex} we proved that the second derivatives are non-negative, concluding the proof for \cref{schurmaf}.  \hfill $\square$
\end{proof}

An \textit{alternative} approach is to prove the Schur-convexity of the univariate MAF and apply the following results.

\begin{proposition}[Schur~\cite{schur1923uber}; Hardy–Littlewood–Pólya \cite{hardy1929some}]
\label{summajor}
Let \( I \subset \mathbb{R} \) be an interval and \( g : I \to \mathbb{R} \) be a convex function. Then the function
\[
\varphi(x) = \sum_{i=1}^n g(x_i)
\]
is Schur-convex on \( I^n \). Consequently, if \( x \prec y \) on \( I^n \), then
\[
\varphi(x) \leq \varphi(y).
\]
\end{proposition}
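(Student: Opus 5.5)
The plan is the classical Karamata-type argument, reducing to the one-variable convexity inequality through the majorization structure. Let $x\prec y$ in $I^n$, with both vectors sorted non-increasingly. By symmetry of $\varphi$, sorting does not change its value. For convex $g$ on $I$, define the slopes
\[
c_i=\frac{g(y_i)-g(x_i)}{y_i-x_i}
\]
when $x_i\neq y_i$, and let $c_i$ be a one-sided derivative of $g$ at $x_i$ otherwise. Then $g(y_i)-g(x_i)=c_i(y_i-x_i)$ for every $i$.

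\textbf{Monotonicity of the slopes.} The first key step is to show that $c_1\ge c_2\ge\cdots\ge c_n$. This follows from the three-chord (slope-monotonicity) property of convex functions: the slope of a chord $[u,v]$ is non-decreasing in each endpoint. Since $x_i\ge x_{i+1}$ and $y_i\ge y_{i+1}$, the chord for index $i$ has both endpoints at least those for index $i+1$, so $c_i\ge c_{i+1}$. In the degenerate case $x_i=y_i$, use the fact that the one-sided derivatives $g'_\pm$ are monotone and lie between the neighbouring chord slopes. I expect this degenerate-case bookkeeping to be the main place requiring care, and I will choose $g'_+$ or $g'_-$ consistently to preserve monotonicity.

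\textbf{Abel summation.} Set $A_k=\sum_{i\le k}y_i$, $B_k=\sum_{i\le k}x_i$ and $D_k=A_k-B_k$. Majorization gives $D_k\ge 0$ for all $k$ and $D_n=0$. Then
\[
\varphi(y)-\varphi(x)=\sum_{i=1}^n c_i(D_i-D_{i-1})=\sum_{k=1}^{n-1}(c_k-c_{k+1})D_k + c_nD_n ,
\]
and every term is nonnegative. Hence $\varphi(x)\le\varphi(y)$, which is exactly Schur-convexity.

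\textbf{Alternative route.} If $g$ is differentiable, one can instead invoke the Schur--Ostrowski criterion quoted earlier. With $\partial\varphi/\partial x_i=g'(x_i)$, we get $(x_i-x_j)(g'(x_i)-g'(x_j))\ge 0$ because $g'$ is non-decreasing, and symmetry of $\varphi$ is immediate. The Abel-summation proof avoids any smoothness assumption, so I would present it as the main argument and mention the derivative check as a remark. This remark is also what links the proposition to Lemma~\ref{cor:maf-convex}.
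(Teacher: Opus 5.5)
The paper does not prove this proposition; it quotes it from Schur and Hardy--Littlewood--P\'olya and uses it only as an alternative route to Theorem~\ref{schurmaf}. The paper's main proof of that theorem is the Schur--Ostrowski check combined with Lemma~\ref{cor:maf-convex}, which is essentially your closing remark and needs differentiability. Your main argument is the standard Karamata proof: chord slopes $c_i$, the three-chord lemma giving $c_1\ge\cdots\ge c_n$, then Abel summation against $D_k\ge 0$ and $D_n=0$. It is correct and needs no smoothness.

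The degenerate case at a closed endpoint of $I$ needs one more sentence. There a one-sided derivative may be infinite (e.g.\ $g(t)=-\sqrt t$ on $[0,1]$ at $t=0$), and at a right endpoint $g'_+$ does not exist, so ``choose $g'_+$ consistently'' is not quite enough. Since $c_i(y_i-x_i)=0$ whenever $x_i=y_i$, any finite value preserving the ordering will do. If $x_i=y_i$ is the left (right) endpoint, sorting forces the same for all $j\ge i$ ($j\le i$), so you can set those $c_j$ equal to the neighbouring finite slope.

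Your route also shows something the citation hides. The boundary term $c_nD_n$ is the only place $D_n=0$ is used. So if $g$ is also non-decreasing, giving $c_n\ge 0$, the same computation yields $\varphi(x)\le\varphi(y)$ under weak majorization $x\prec_w y$ alone. That is exactly the form the paper needs. Lemma~\ref{lem:majorset} supplies only $|\mathbf G|\prec_w x$, so the proposition as stated, with strong majorization, does not by itself justify Theorem~\ref{MAflaplace-multi}. The monotonicity of the univariate MAF shown in the proof of Theorem~\ref{schurmaf} supplies the missing $c_n\ge 0$.
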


\subsection{Proof of \cref{lem:majorset}}
\label{majorseet}

In the following, we prove Lemma \ref{lem:majorset} 
on the majorization set for $\ell_2$ clipped gradients.

\begin{proof}
Since the $\ell_2$ clipping ensures that $\|\mathbf{g}\|_2 \leq C$, we have $|\mathbf{g}_1| \leq C = x_1$. Applying the AM–QM inequality to the first $i$ marginal clipped gradients, we have
\[
\frac{1}{i} \sum_{j=1}^i |\mathbf{g}_j| \leq \sqrt{\frac{1}{i} \sum_{j=1}^i (|\mathbf{g}_j|)^2} \leq \frac{C}{\sqrt{i}},
\]
which implies
\[
\sum_{j=1}^i |\mathbf{g}_j| \leq C \sqrt{i}.
\]
On the other hand, the majorization set $x$ satisfies
\[
\sum_{j=1}^i x_j = C \sqrt{i},
\]
since
\[
\sum_{j=1}^i x_j = C (\sqrt{i} - \sqrt{0})
\]
by telescoping the differences. Thus, for each $i = 1, \dots, n$, we have
\[
\sum_{j=1}^i |\mathbf{g}_j| \leq \sum_{j=1}^i x_j,
\]
establishing that $|\mathbf{G}| \prec_w x$.
\end{proof}

\section{Further Analysis of \textsc{Lap}$_2$ DP-SGD}
\label{sec:add_results}
The \textsc{Lap}$_2$ DP-SGD algorithm (Algorithm~\ref{alg:Lap_2dpsgd}) serves as a drop-in replacement for Gaussian DP-SGD. 
Leveraging our majorization-based accountant, it provides tight $(\epsilon, \delta)$ guarantees while preserving training efficiency. 
We examine the evolution of DP moments under \textsc{Lap}$_2$ and analyze its batching and scaling behavior, highlighting how privacy and utility evolve across different regimes of $\epsilon$. 

\vspace{-0.05in}
\subsection{Evolution of Lap$_2$ Moments: A Case Study} 
\label{sec:evo_lap2}

We first in Figure~\ref{fig:lap_privacy_wall} illustrate the relationship between the clipping threshold and the privacy budget $\epsilon$ under the \textsc{Lap}\textsubscript{2} mechanism for two privacy regimes. Figure~\ref{fig:fig1_lap} corresponds to the tight-privacy region ($\epsilon \in [0,1]$), while Figure~\ref{fig:fig2_lap} represents the moderate-privacy region ($\epsilon \in [1,5]$). The color bar indicates the optimal moment order $\lambda_{best}$ obtained from the accountant. As shown, smaller clipping values dominate in the high-privacy regime, where higher $\lambda_{best}$ values are exploited for tighter accounting. In contrast, as $\epsilon$ increases, larger clipping thresholds become feasible with smaller optimal $\lambda_{best}$, reflecting improved stability and reduced noise requirements in moderate-privacy settings.

\begin{algorithm}[!t]
\caption{\textsc{Lap}$_2$ DP-SGD: Drop-in Replacement of Gaussian DP-SGD}
\label{alg:lap2}
\begin{footnotesize}
\begin{tabbing}
\hspace{1em} \= \hspace{1em} \= \hspace{1em} \= \kill
\textbf{Input:} Dataset $\mathcal{D}$, steps $T$, sampling rate $\zeta$, learning rate $\eta$,\\ 
\> target privacy $(\epsilon, \delta)$ \\
\textbf{Output:} Final model parameters $\boldsymbol{\theta}$ under $(\epsilon, \delta)$-DP \\

\textbf{// Step 1: Optimize $(C, b)$ using Algorithm~\ref{Hfunction}} \\
$(C, b) \gets \mathcal{H}(\epsilon, \delta, T, \zeta)$ \\

\textbf{// Step 2: DP-SGD with $\ell_2$ clipping and Laplace noise} \\
\textbf{for} $t = 1$ to $T$ \textbf{do} \\
\> Sample mini-batch $\mathcal{B}_t \subset \mathcal{D}$ at rate $\zeta$ \\
\> \textbf{for each} $(x_i, y_i) \in \mathcal{B}_t$ \textbf{do} \\
\> \> Compute per-example gradient: $\mathbf{g}_i = \nabla \ell(f_{\boldsymbol{\theta}}(x_i), y_i)$ \\
\> \> Clip: $\mathbf{g}_i^C = \frac{\mathbf{g}_i}{\max(1, \|\mathbf{g}_i\|_2 / C)}$ \\
\> \textbf{end for} \\
\> Average: $\bar{\mathbf{g}}_t = \frac{1}{|\mathcal{B}_t|} \sum \mathbf{g}_i^C$ \\
\> Add noise: $\tilde{\mathbf{g}}_t = \bar{\mathbf{g}}_t + \text{Lap}(0, b)$ \\
\> Update model: $\boldsymbol{\theta} \gets \boldsymbol{\theta} - \eta \cdot \text{Optimizer}(\tilde{\mathbf{g}}_t)$ \\
\textbf{end for} \\
\textbf{return} $\boldsymbol{\theta}$
\end{tabbing}
\end{footnotesize}
\label{alg:Lap_2dpsgd}
\end{algorithm}

\begin{figure}[htbp]
    \hfill
    \begin{subfigure}[b]{0.24\textwidth}
        \includegraphics[width=\linewidth]{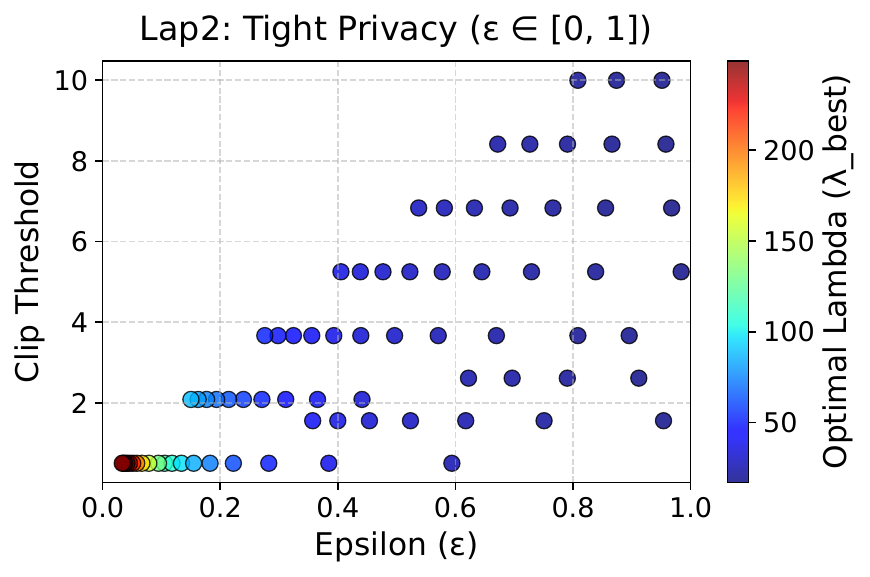}
        \caption{\textsc{Lap}\textsubscript{2} Tight Privacy}
        \label{fig:fig1_lap}
    \end{subfigure}
    \hfill
    \begin{subfigure}[b]{0.24\textwidth}
        \includegraphics[width=\linewidth]{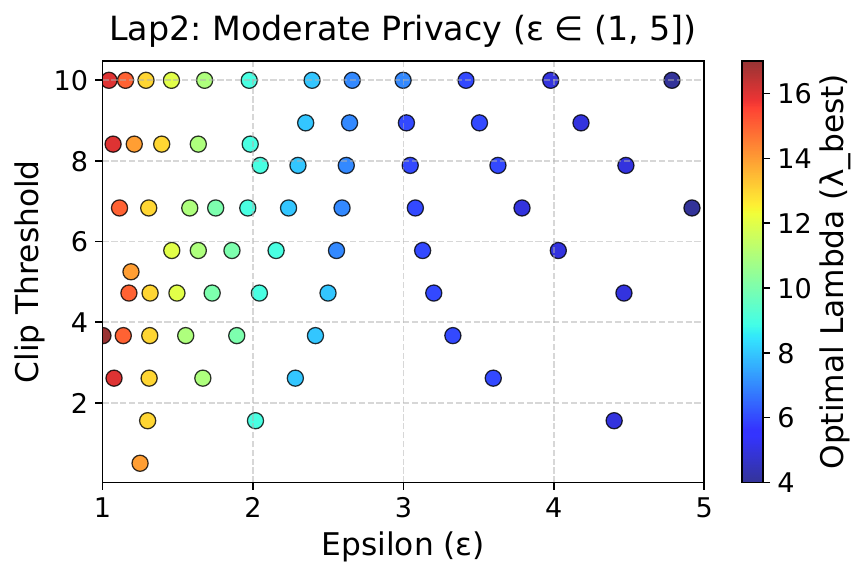}
        \caption{\textsc{Lap}\textsubscript{2} Moderate Privacy}
        \label{fig:fig2_lap}
    \end{subfigure}
    
\caption{
\textsc{Lap}\textsubscript{2} supports larger clipping norms without runaway noise, due to its heavy tails and efficient high-order moments accounting.}\vspace{-0.2in}
\label{fig:lap_privacy_wall}
\end{figure}

\subsection{Scaling Laws under Lap$_2$}
\label{sec:appendix_scaling_law}
Previous works~\cite{TAN} have found scaling laws, correlations between the various parameters as well as the accuracy of the final differentially private model, for the gaussian noise mechanism. These scaling laws are also applicable to the Lap$_2$ noise mechanism. Many of these laws are derived in the discussion of SNR in Section~\ref{sec:methodology}, but an important scaling law is the relationship between accuracy and batch size. Under the gaussian mechanism, it was found that accuracy decreases as batch size increases, falling very slowly for datasets like CIFAR10. In Figure~\ref{fig:scaling}, it is shown that Lap$_2$ has a peak batch size for CIFAR10, after which the accuracy decays slowly. In Figure~\ref{fig:batch_size_mnist}, the accuracy decreases log linearly as batch size increases for fixed clip, $\epsilon$ and epochs.

\begin{figure}[!htbp]
    \centering
    \includegraphics[width=0.6\linewidth]{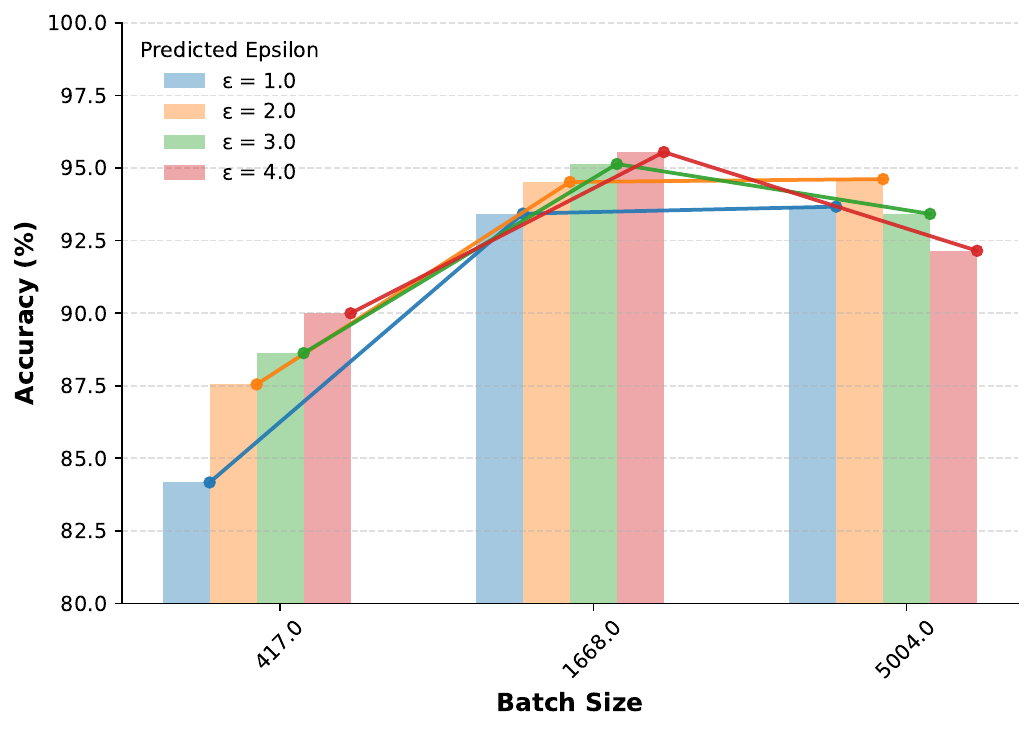}\vspace{-0.1in}
    \caption{Accuracy vs batch size for various $\epsilon$. Found by fine-tuning a ViT with CIFAR10 trained for 1 epoch. THis graph shows that there is an optimal batch size, with a slow decay in accuracy afterward. Under gaussian, the decay is slow and consistent across batch sizes\cite{TAN}.}
    \label{fig:scaling}
\end{figure}

\vspace{-0.8em} 
\begin{figure}[!htbp]
    \centering   \includegraphics[width=0.6\linewidth]{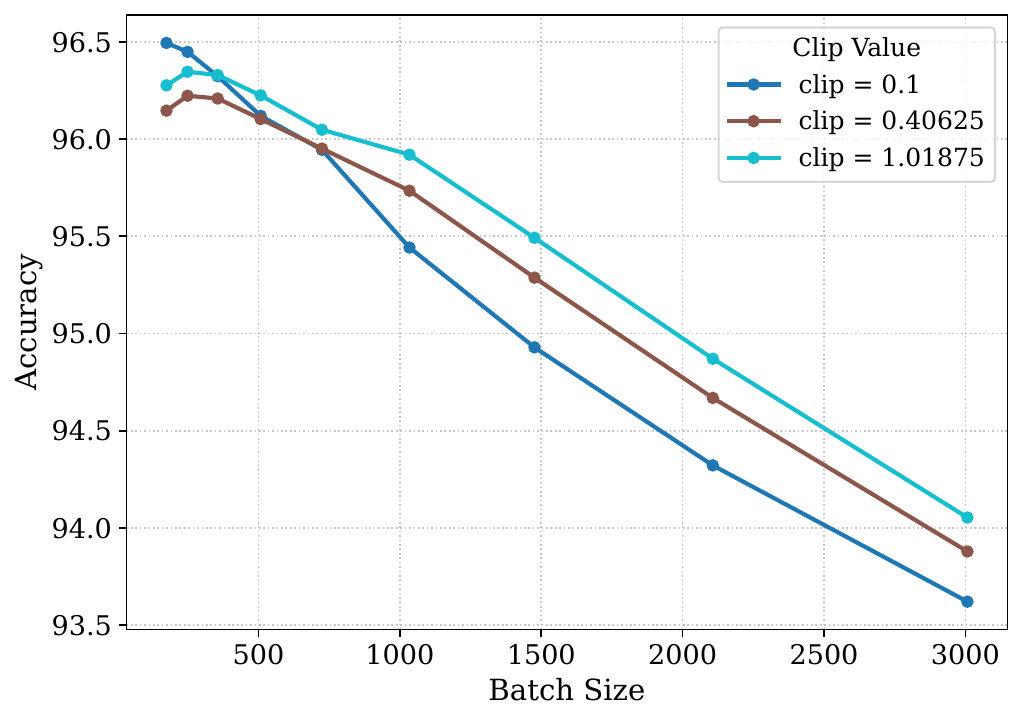}
    \caption{Accuracy vs batch size for various clippling norms on the MNIST dataset trained on CNN using \textsc{Lap}\textsubscript{2}-DPSGD for $20$ epochs for a fixed privacy budget $\epsilon=2$ and $\delta=10^{-5}$.}   \label{fig:batch_size_mnist}
\end{figure}



\end{appendices}
\end{document}